\newtheorem{thma}{Theorem}
\newtheorem{lemm}{Lemma}
\def\CC{{C\nolinebreak[4]\hspace{-.05em}\raisebox{.4ex}{\tiny\bf ++}}}
\journal{Computer Physics Communications}
\newcommand{\fracnot}[3]{\mathcal{F}_{#1}^{#2,#3}}
\begin{document}
\begin{frontmatter}



\title{Efficient GPU Thread Mapping on Embedded 2D Fractals}


\author[a]{Cristob\'al A. Navarro\corref{author}}
\author[a]{Felipe A. Quezada}
\author[b]{Nancy Hitschfeld}
\author[a]{Raimundo Vega}
\author[c]{Benjamin Bustos}

\cortext[author] {Corresponding author.\\\textit{E-mail address:} cristobal.navarro.g@gmail.com}
\address[a]{Instituto de Inform\'atica, Universidad Austral de Chile.}
\address[b]{Departamento de Ciencias de la Computaci\'on, Universidad de Chile.}
\address[c]{Millenium Institute Foundational Research on Data, Department of Computer Science, University of Chile.}

\begin{abstract}
This work proposes a new approach for mapping GPU threads onto a family of discrete embedded 2D fractals. A block-space map $\lambda: \mathbb{Z}_{\mathbb{E}}^{2} \mapsto \mathbb{Z}_{\mathbb{F}}^{2}$ is proposed, from Euclidean parallel space $\mathbb{E}$ to embedded fractal space $\mathbb{F}$, that maps in $\mathcal{O}(\log_2 \log_2(n))$ time and uses no more than $\mathcal{O}(n^\mathbb{H})$ threads with $\mathbb{H}$ being the Hausdorff dimension of the fractal, making it parallel space efficient.  When compared to a bounding-box (BB) approach, $\lambda(\omega)$ offers a sub-exponential improvement in parallel space and a monotonically increasing speedup $n \ge n_0$. The Sierpinski gasket fractal is used as a particular case study and the experimental performance results show that $\lambda(\omega)$ reaches up to $9\times$ of speedup over the bounding-box approach. A tensor-core based implementation of $\lambda(\omega)$ is also proposed for modern GPUs, providing up to $\sim40\%$ of extra performance. The results obtained in this work show that doing efficient GPU thread mapping on fractal domains can significantly improve the performance of several applications that work with this type of geometry.   
\end{abstract}

\begin{keyword}
GPU computing; thread mapping; tensor cores; discrete embedded 2D fractals; block-space fractal domains; Sierpinski gasket.

\end{keyword}

\end{frontmatter}

\section{Introduction}
\label{sec:intro}
Fractals can be described as self-similar structures \cite{mandelbrot2004} where a \textit{similar}\footnote{Depending on which fractal, the term \textit{similar}
can refer to \textit{exactly similar} or \textit{quasi similar}.} geometrical pattern is found at all scales.  Several natural phenomena produce fractal
patterns that obey a self-similar structure \cite{Mandelbrot:98509}, such as plant and tree growth \cite{Oppenheimer:1986:RTD:15886.15892,
Palmer1988}, terrain formation \cite{MILNE198867, 4767591}, molecular dynamics \cite{rothemund2004}, snowflake crystallization \cite{He2008739}, blood vessels
\cite{PhysRevLett.90.118101}, morphological features of living organisms \cite{WeibelL361}, among many others, display a fractal design where
self-similarity is a relevant feature for modeling its geometrical structure. 
Computer applications related to these fields may choose to 
embed the 2D fractal into a discrete Euclidean domain which acts as a bounding-box in memory. 
Using an embedding space for a fractal helps in achieving an efficient simulation in terms of memory access patterns (\textit{i.e.}, high memory bandwidth), as data-parallel computations (for example, computation of nearest neighbors) can perform aligned memory accesses and exploit the spatial locality within the fractal structure. In other words, memory locations $(x\pm1,y\pm1)$ define a neighborhood in both the embedded space and the actual fractal as well (some of the elements of this neighborhood belong to the fractal domain). Figure \ref{fig:h-fractal-embedded} illustrates as an example the H-fractal embedded in a space of $n \times n$ with $n$ being its its side length. 
\begin{figure}[ht!]
\centering
\includegraphics[scale=0.20]{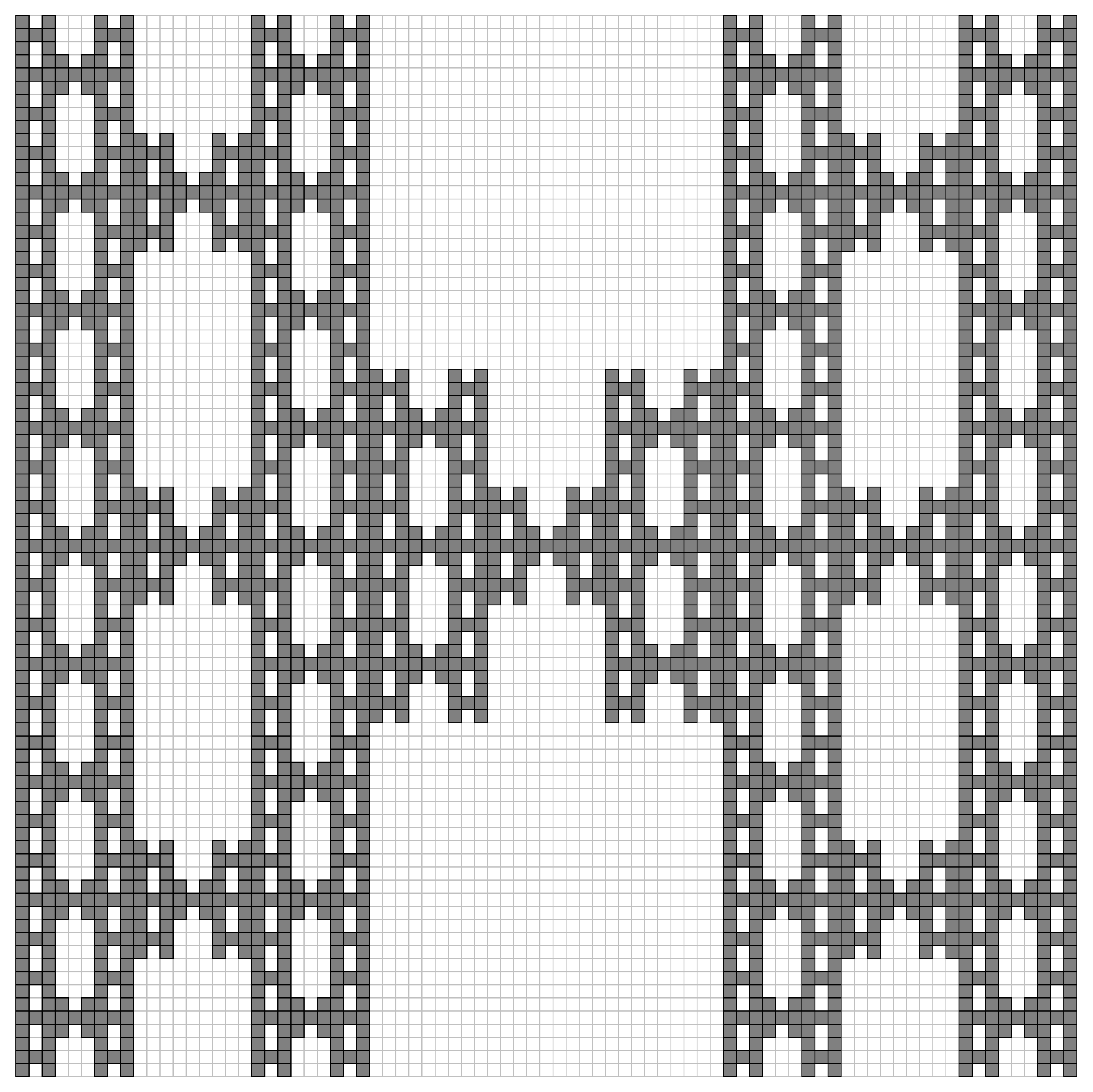}
\caption{A H-fractal embedded in a discrete $n \times n$ Euclidean space.}
\label{fig:h-fractal-embedded}
\end{figure}

Several of the processes performed on a fractal are combinations 
of \textit{map} and \textit{reduce} operations in different proportions. The \textit{map} component relates to the \textit{per-element} computations while the \textit{reduce} component relates to global results being computed from the parts of the problem (e.g. a global measure from all discrete elements of the fractal). One typical simulation pattern that is a combination of a full \textit{map} with multiple local \textit{reductions} is the tiled nearest-neighbors computation, which operates on all data elements of a fractal and for each one considers a small reduction from its neighborhood data. Such pattern is found in cellular automata transition functions, spin lattice Monte Carlo simulation steps and finite difference method (FDM) time-step computations, among others. Another computational pattern frequently used is the computation of a global measure of the fractal at a given state, which would involve a full \textit{reduction} of the values of all data elements of the fractal passed through a mathematical expression. This second pattern can be found when computing macroscopic measures from microscopic definitions, such as in Spin Lattice models or n-body simulations.
Eventually, when the fractal is large enough to the point of containing millions of data elements, a sequential computation can take an excessive amount of time for the practical requirements of a certain field, especially if there are real-time requirements. In these situations GPU computing becomes an attractive tool for accelerating these tasks \cite{navhitmat2014}. 

GPU computing has become an important tool for leveraging the performance of several compute demanding applications that contain data-parallel workloads \cite{navhitmat2014}. The two main motivations to use GPU Computing are the (1) high TFLOPS and memory bandwidth, which can be up to an order of magnitude faster than traditional CPU hardware, and (2) the high energy efficiency with respect to traditional CPU based systems. It is important to mention however that exploiting these two features efficiently requires a dedicated algorithm design and implementation as GPUs are more restricted than CPUs in terms of control logic (scheduling, branch prediction, prefetching) and memory access patterns (cache, global memory access). One aspect of GPUs that has been recently studied is achieving efficient GPU thread mapping, an optimization technique that can minimize the number of necessary threads when working with non-trivial data domains. In the GPU programming model, for every GPU computation there is a stage in the pipeline where threads are mapped from parallel work space to data space. A map, defined as $f: \mathbb{Z}^k \rightarrow \mathbb{Z}^m$, transforms each $k$-dimensional point $x=(x_1, x_2, ..., x_k)$ in parallel space $P^k$ into a unique $m$-dimensional point $f(x) = (y_1, y_2, ..., y_m)$ in data
space $D^m$. This work uses the notation introduced by previous GPU thread mapping works \cite{navarro2018competitiveness, 8291959, DBLP:conf/hpcc/NavarroH14}, which defines GPU parallel spaces as orthotopes\footnote{A $k$-orthotope is the generalization of the notion of a rectangle or box, for $k$ dimensions.}  $\Pi^k \in P^k$ in $k=1,2,3$ dimensions. In this work the orthotope of interest is the two-dimensional one, $\Pi^2$. A known way of mapping threads to any data-domain is to use the \textit{bounding-box} approach, that builds an orthotope $\Pi^2$ sufficiently large to cover the corresponding bounding-box of the data space and threads are mapped using the identity $f(\omega) = \omega$. Such a map is highly convenient and efficient for the class of problems where data space is also defined by an orthotope; such as vectors ($\Pi$), tables ($\Pi^2$), matrices ($\Pi^2$) and box-shaped volumes ($\Pi^3$). However, for a discrete embedded 2D fractal, this approach is no longer efficient in terms of parallel space as many threads would fall inside the embedded space but outside the fractal domain, introducing a performance penalty to the execution time when discarding these threads at run-time. For such cases, an efficient orthotope would be one with asymptotically the same number of threads as data elements of the fractal. Figure \ref{fig:vicsek-proposal-map} illustrates the unwanted (left to center) and wanted scenarios (right to center) for the case of the Vicsek fractal.


\begin{figure}[ht!]
\centering
\includegraphics[scale=0.045]{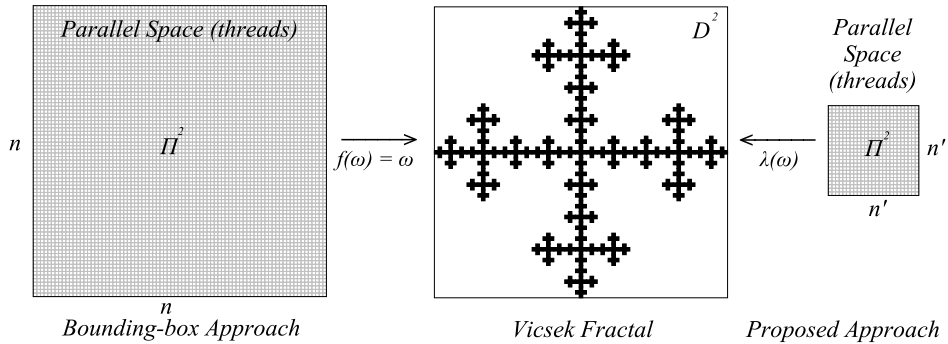}
\caption{GPU thread mapping for the Vicsek fractal. The left-to-center mapping illustrates the bounding-box approach and its cost in thread resources, while the right-to-center mapping shows the proposed approach, using significantly less thread resources, but a more elaborate map, namely $\lambda(\omega)$.}
\label{fig:vicsek-proposal-map}
\end{figure}

Two research questions arise from this GPU efficiency problem; the first question:
\textit{Is there any parallel-space efficient function, namely $\lambda(\omega)$, that can map threads only on the data elements of an embedded 2D fractal}? 
The second question relates to performance: \textit{Will the parallel-space improvement translate into a significant GPU performance improvement}?

The present work presents theoretical and experimental results that answer these two questions positively. A dedicated analysis is devoted to show that an alternating
unrolling strategy allows to define a parallel-space efficient $\lambda(\omega)$ that only requires $\mathcal{O}(n^\mathbb{H})$ threads, with $\mathbb{H}$ being the Hausdorff dimension of the fractal. In terms of performance, by taking advantage of intra-block parallelism, $\lambda(\omega)$ becomes computable in $\mathcal{O}(\log_2 \log_2 (n))$ time which is fast enough to produce a monotonically increasing speedup with respect to a bounding-box approach, once $n > n_0$ with $n_0$ being a threshold size different for each fractal. In addition to these results, the $\lambda(\omega)$ map is also adapted to GPU tensor core computation, further increasing its performance by up to $40\%$. This last result serves as an evidence that GPU tensor cores may be utilized in more ways than what they were initially thought for (deep learning).

This work is an extension and generalization of a previous conference work \cite{8291959} where preliminary results were presented for a specific fractal. This work generalizes the map for a family of embedded 2D fractals and includes an extensive presentation of experimental results with relevant tests, as well as a new section in which the proposed mapping is further accelerated by encoding the expression into tensor core operations. The rest of the manuscript is organized as follows: Section \ref{sec:related-work} presents related work on the field, Section \ref{sec:NBB-fractals} characterizes the NBB fractals family, Section \ref{sec:formulation-lambda} formulates the map with new theoretical results, Section \ref{sec:intra-block-mapping} describes several approaches for intra-block mapping, Section \ref{sec:case-study-sierpinski} offers the case study of the Sierpinski gasket with experimental performance measurements and Section \ref{sec:discussion-conclusions} concludes the work highlighting the main results, discussing them and describing possible further work within this line of research.

\section{Related Work}
\label{sec:related-work}
Jung \textit{et al.} \cite{Jung2008} explored the possibilities of improving the GPU mapping on triangular domains, by proposing packed data structures that represent triangular and symmetric matrices with applications to LU and Cholesky decomposition. Their strategy is based on building a \textit{rectangular box} for accessing and storing a triangular matrix (upper or
lower). Data structures become practically half the size with respect to classical methods based on the full matrix. The strategy was originally intended for saving memory (\textit{i.e.,} the matrix memory usage), however one can apply the concept analogously to save parallel space.

Ries \textit{et al.} contributed with a parallel GPU method for the triangular matrix inversion \cite{Ries:2009:TMI:1654059.1654069}.  The authors identified
that the parallel space indeed can be improved by using a \textit{recursive partition} of the grid\footnote{A grid is a collection of thread-blocks which are spatially organized and execute asynchronously one from another.}, based on a \textit{divide and conquer} strategy.  The
mapping approach takes $O(\log_2(n))$ with $n$ being the side of a square matrix.

Navarro, Hitschfeld and Bustos have proposed a block-space map function for $2$-simplices\footnote{A $k$-simplex is the generalization of the notion
of a triangle to $k$-dimensions. A $2$-simplex corresponds to the triangle while a $3$-simplex corresponds to a tetrahedron.}  and
$3$-simplices \cite{DBLP:conf/hpcc/NavarroH14, CLEI-2016-navarro, navarro2018competitiveness}, 
based on the solution of an $m$ order equation that is formulated from the linear enumeration of the discrete elements. The authors report performance
improvement for $2$-simplices, and for the $3$-simplex case, the mapping technique is extended to the discrete orthogonal tetrahedron, where the parallel space
usage can be $6\times$ more efficient. However the authors clarify that it is difficult to translate such space improvement into performance improvement, as the
map requires the computation of several square and cubic roots that introduce a significant amount of overhead to the process. From the point of view of
data-reorganization, a succinct blocked approach can be combined along with the block-space thread map, producing additional performance benefits with a
sacrifice of $o(n^3)$ extra memory. 

Exploring the benefits of efficient GPU mapping onto embedded 2D fractals is a relevant yet unexplored topic of research, as its geometry is no longer Euclidean as in the related works.  Finding a proper efficient $\lambda(\omega)$ would produce an asymptotic improvement in parallel space and a potential performance improvement that could eventually be exploited.

\section{Characterizing Discrete Embedded 2D Fractals}
\label{sec:NBB-fractals}
This Section characterizes embedded 2D fractals and defines two Lemmas regarding dimension and space packing, which provide useful insights to formulate an efficient GPU thread map for a family of embedded 2D fractals that share the same construction principle. Also, a block-space mapping strategy is proposed to further improve on the number of map computations performed.

\subsection{The Non-overlapping Bottom-up Boxes (NBB) Family}
Embedded 2D Fractals are discrete non-Euclidean structures that live in $\mathbb{Z}^2$ and are contained inside a tight Euclidean embedding space, \textit{i.e.}, a 2D bounding box. By being discrete structures, these fractals have a lower-bound when scaling down, \textit{i.e.}, a unit of space, but can scale-up infinitely. Because of this, discrete embedded 2D fractals are best described using a bottom-up approach rather than a top-down one. The bottom-up approach consists of defining the fractal as replications of itself from the previous level of scale, with different translations to each replica. An additional restriction is introduced to this type of fractals, which is that the bounding boxes\footnote{Not to be confused with the bounding-box approach used in GPU computing, which refers to a programming mode that uses the identity $f(x) = x$ to map from parallel space to data space.} of the replicas cannot overlap in space, that is, a location in the embedding space cannot be occupied by more than one replica. The rest of the manuscript will refer to this kind of fractals as \textit{Non-overlapping Bottom-up Boxes} fractals, or NBB fractals. 

Several fractals can be built following this scheme, including the Sierpinski Gasket, Cantor set, Vicsek fractal, H-Fractal, among others. Table \ref{tbl:discrete-fractals} presents a list with examples of NBB fractals with their bottom-up building step and their Hausdorff dimension as well.

\begin{table}[ht!]
    \begin{center}
    \begin{tabular}{ p{2.0cm} | c | l | p{2.2cm} }
    Fractal Name & Illustration & NBB Step & Hausdorff Dimension ($\mathcal{H} = \frac{\log(k)}{\log(s)}$)\\
    \hline
    Sierpinski Gasket &
    \raisebox{-\totalheight}{\includegraphics[scale=0.03]
    {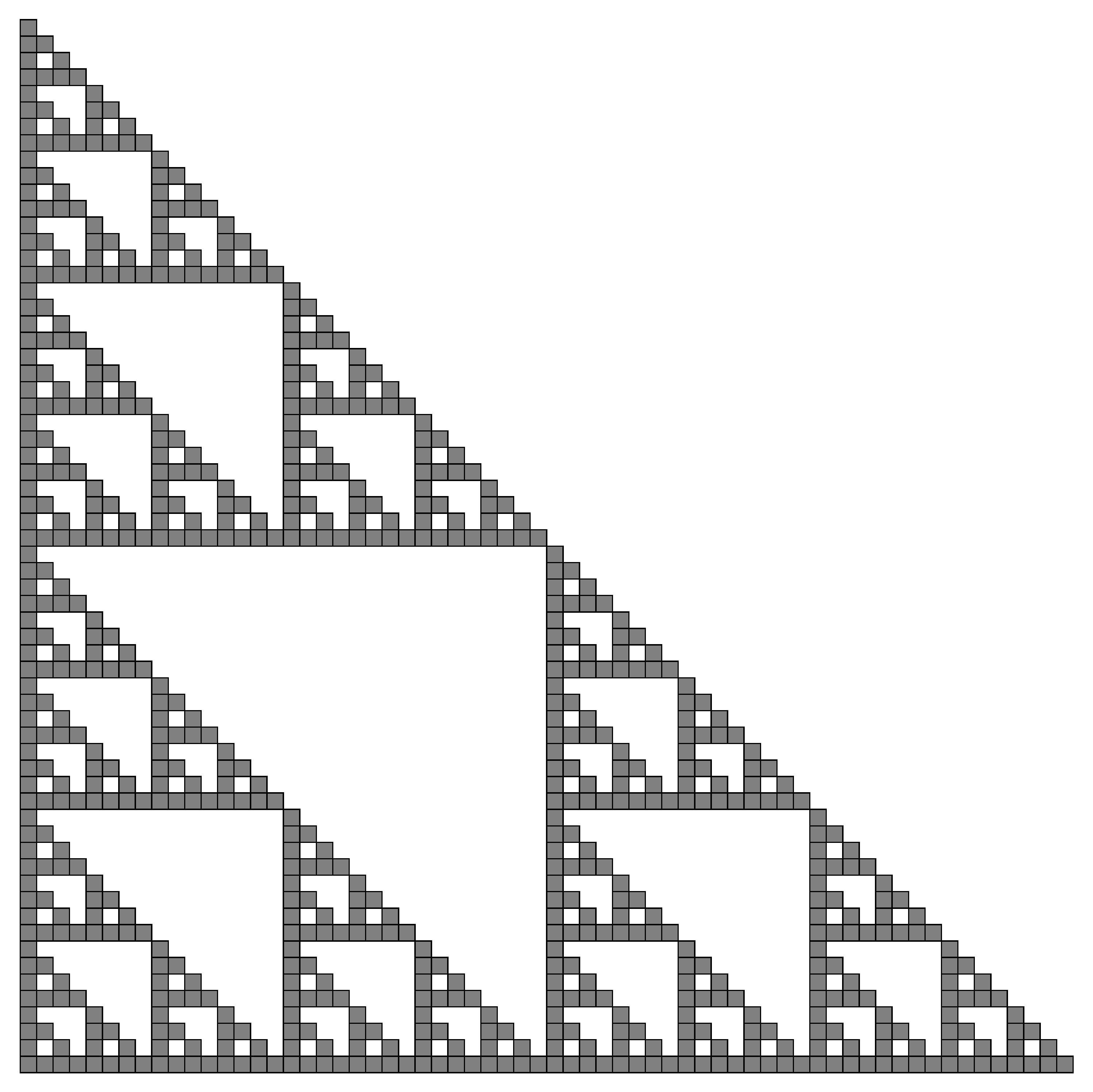}}  & 
    \raisebox{-\totalheight}{\includegraphics[scale=0.5]
    {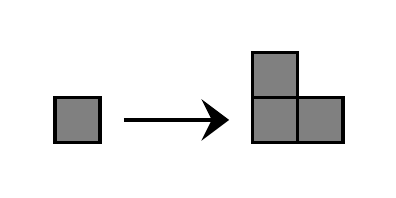}}  & 
    \ \newline $\frac{\log(3)}{\log(2)} \approx 1.58$\\
    \hline
    Chandelier (Custom) &
    \raisebox{-\totalheight}{\includegraphics[width=2.0cm, height=0.7cm]
    {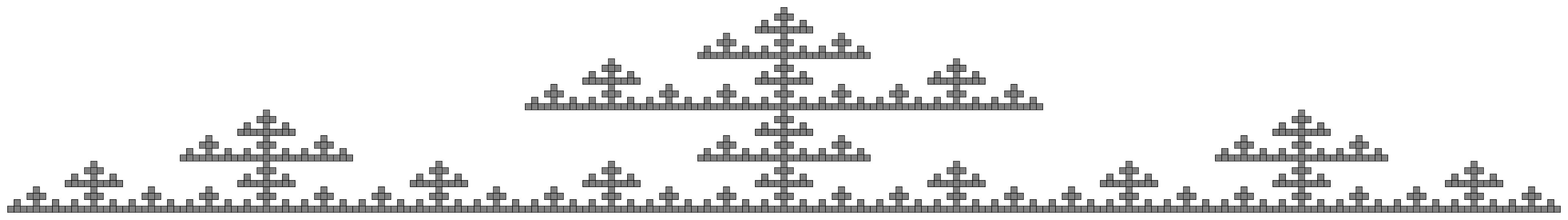}} & 
    \raisebox{-\totalheight}{\includegraphics[scale=0.5]
    {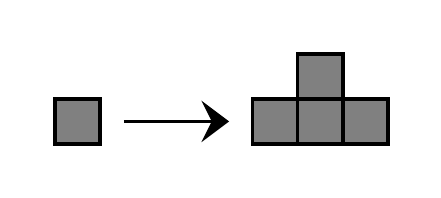}}  & 
    \ \newline $\frac{\log(4)}{\log(3)} \approx 1.26$\\
    \hline
    H-Fractal &
    \raisebox{-\totalheight}{\includegraphics[scale=0.03]
    {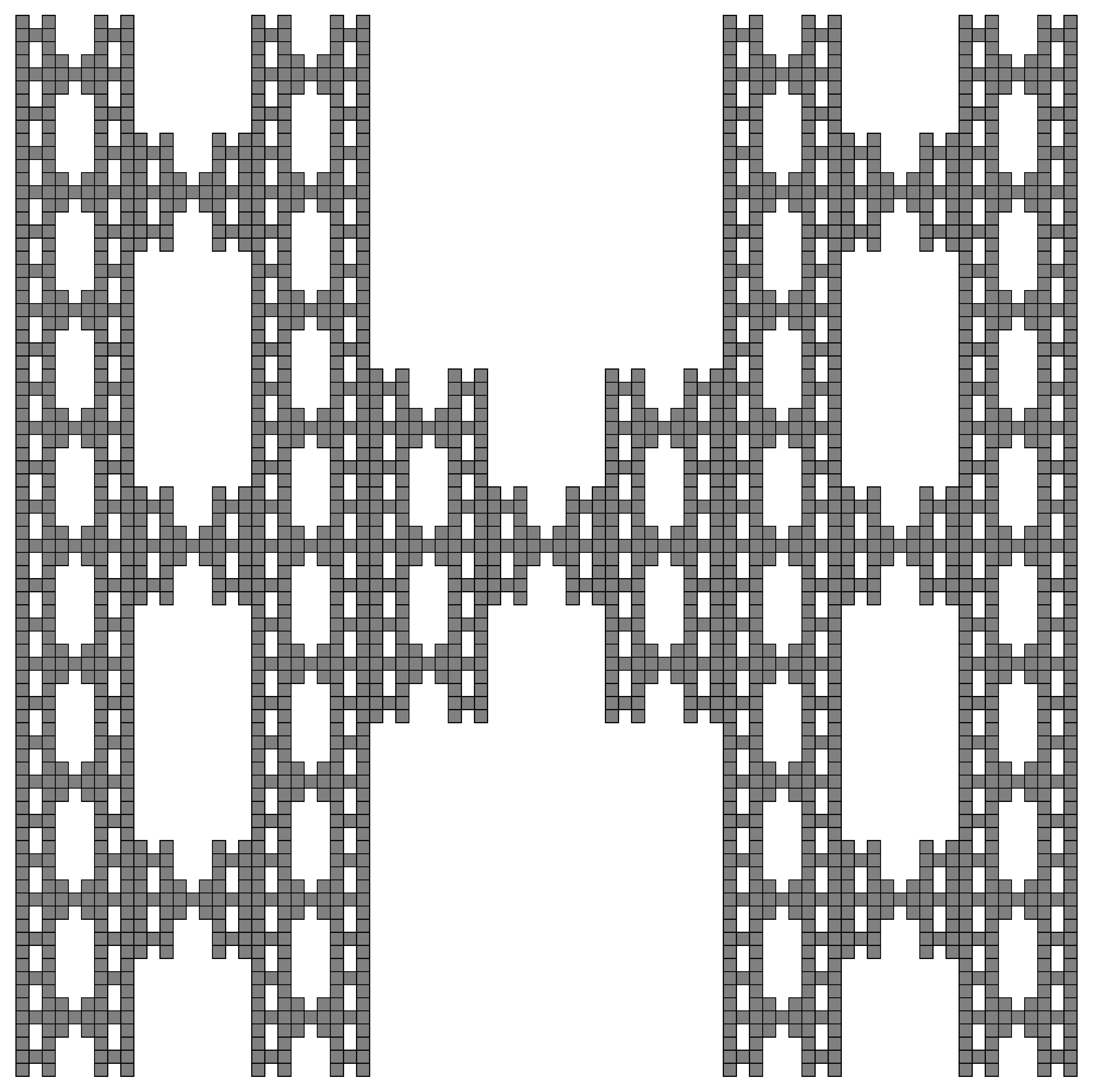}} & 
    \raisebox{-\totalheight}{\includegraphics[scale=0.5]
    {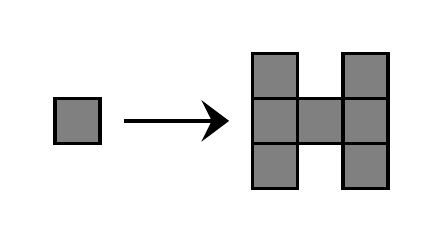}}  & 
    \ \newline $\frac{\log(7)}{\log(3)} \approx 1.77$\\
    \hline
    Candy (Custom) &
    \raisebox{-\totalheight}{\includegraphics[scale=0.012]
    {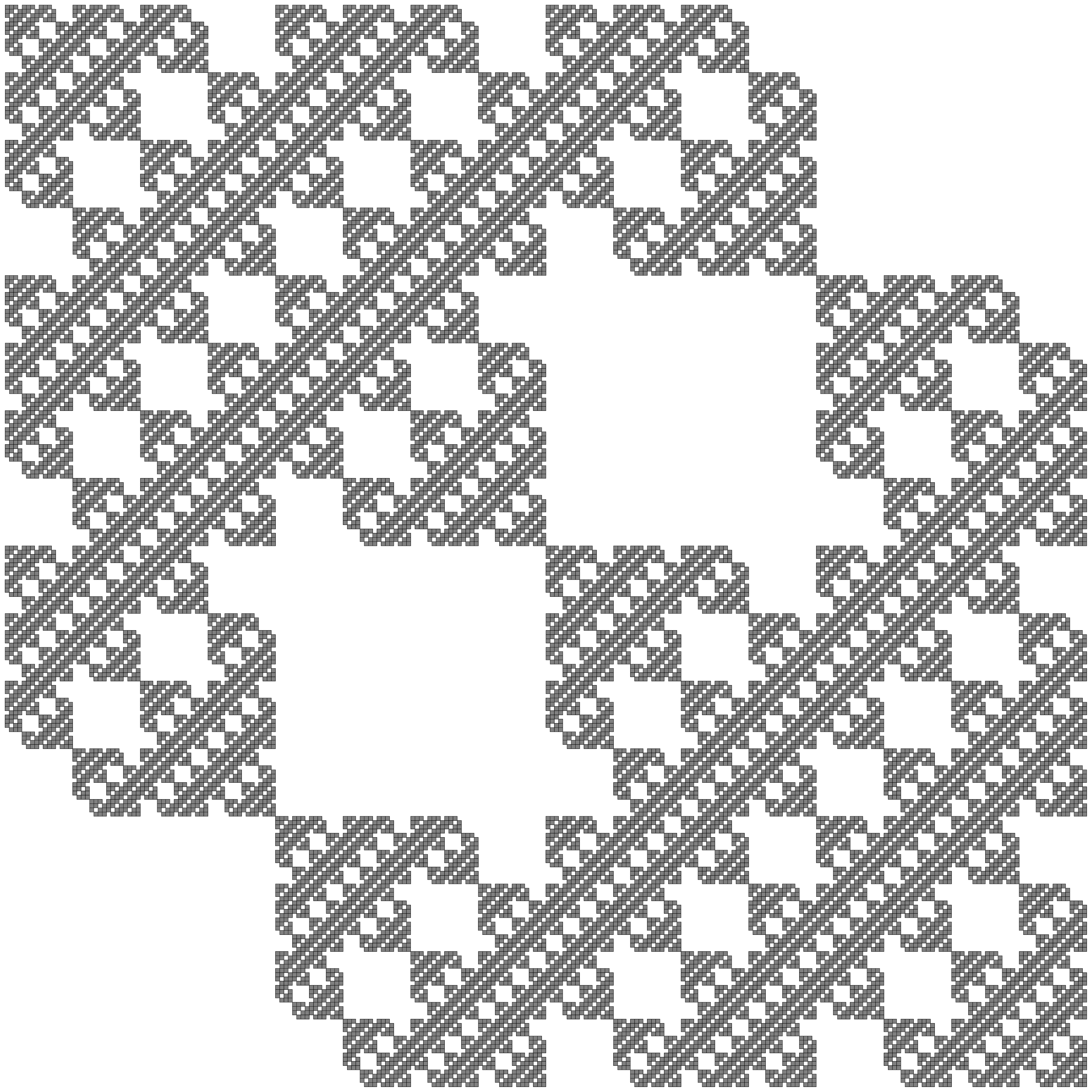}} & 
    \raisebox{-\totalheight}{\includegraphics[scale=0.5]
    {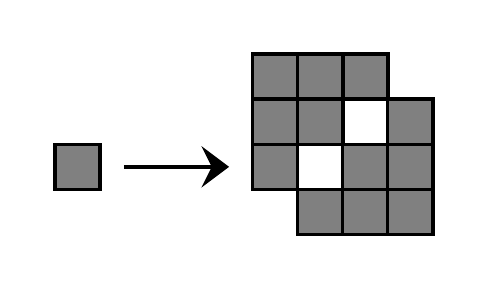}}  & 
    \ \newline $\frac{\log(12)}{\log(4)} \approx 1.79$\\
    \hline
    Sierpinski Carpet &
    \raisebox{-\totalheight}{\includegraphics[scale=0.035]
    {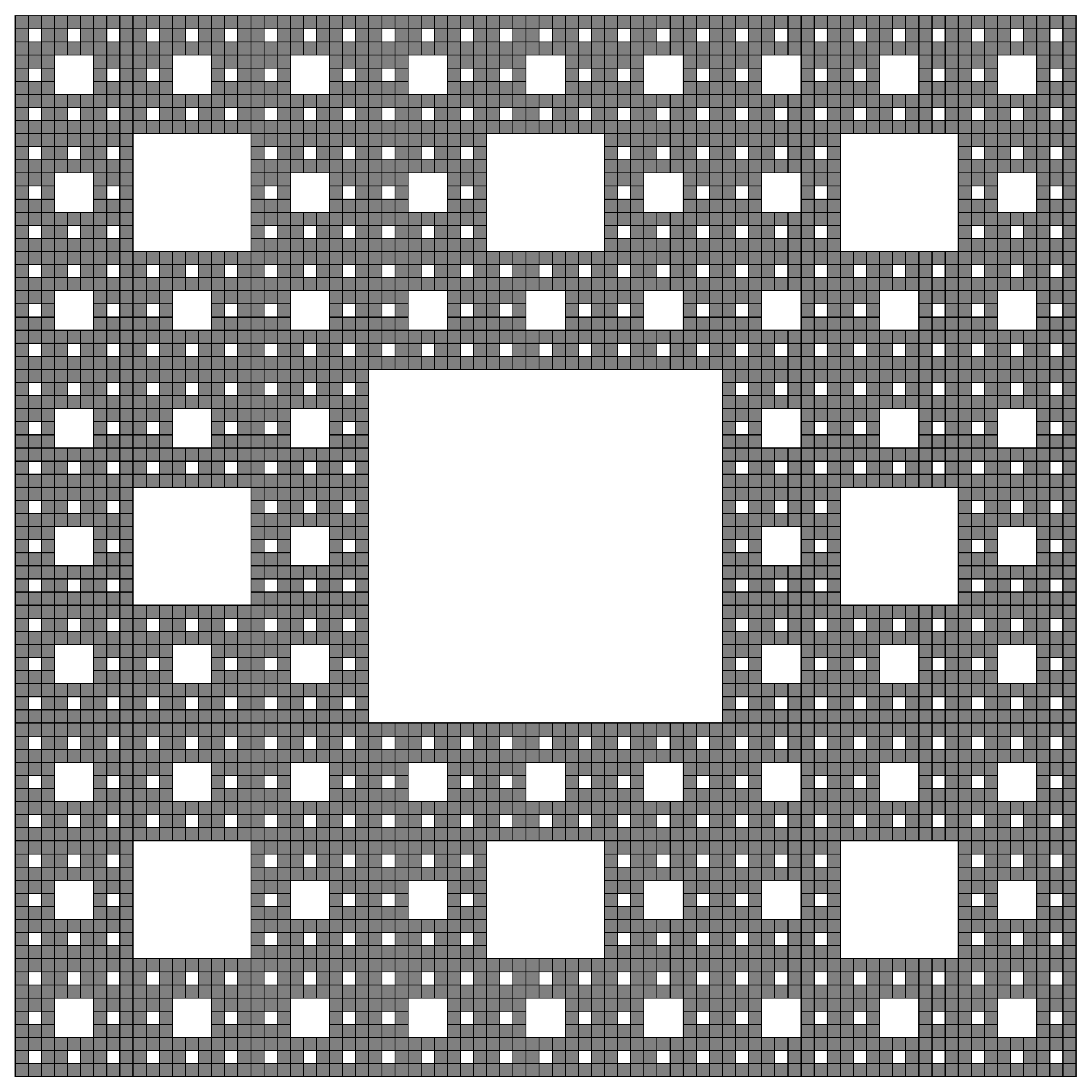}} & 
    \raisebox{-\totalheight}{\includegraphics[scale=0.5]
    {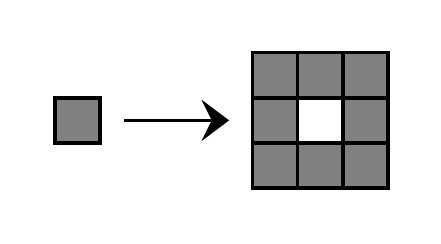}}  & 
    \ \newline $\frac{\log(8)}{\log(3)} \approx 1.89$\\
    \hline
    X-Fractal (Custom) &
    \raisebox{-\totalheight}{\includegraphics[scale=0.010]
    {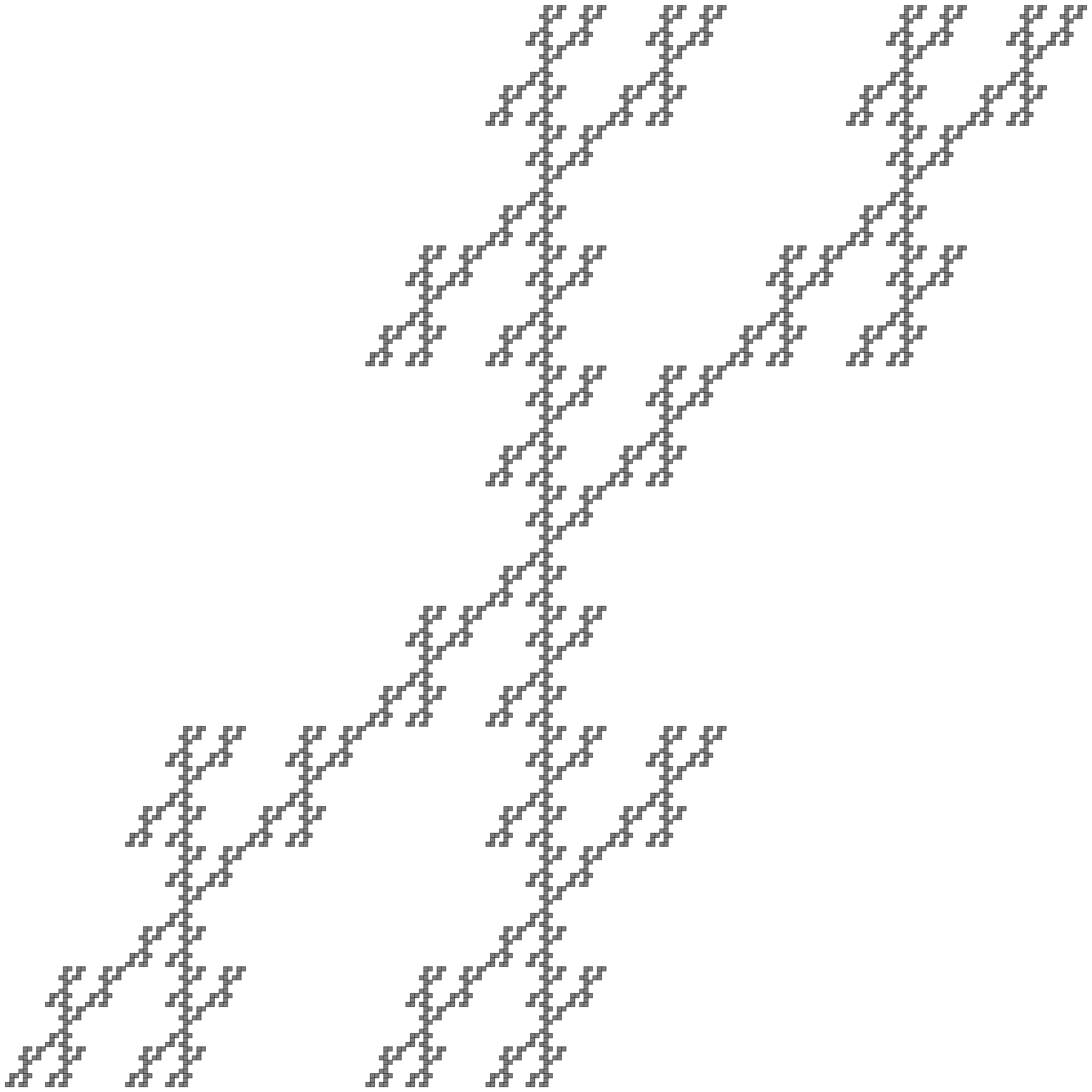}} & 
    \raisebox{-\totalheight}{\includegraphics[scale=0.5]
    {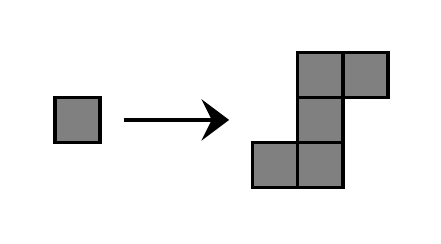}}  & 
    \ \newline $\frac{\log(5)}{\log(3)} \approx 1.46$\\
    \hline
    Vicsek Fractal &
    \raisebox{-\totalheight}{\includegraphics[scale=0.010]
    {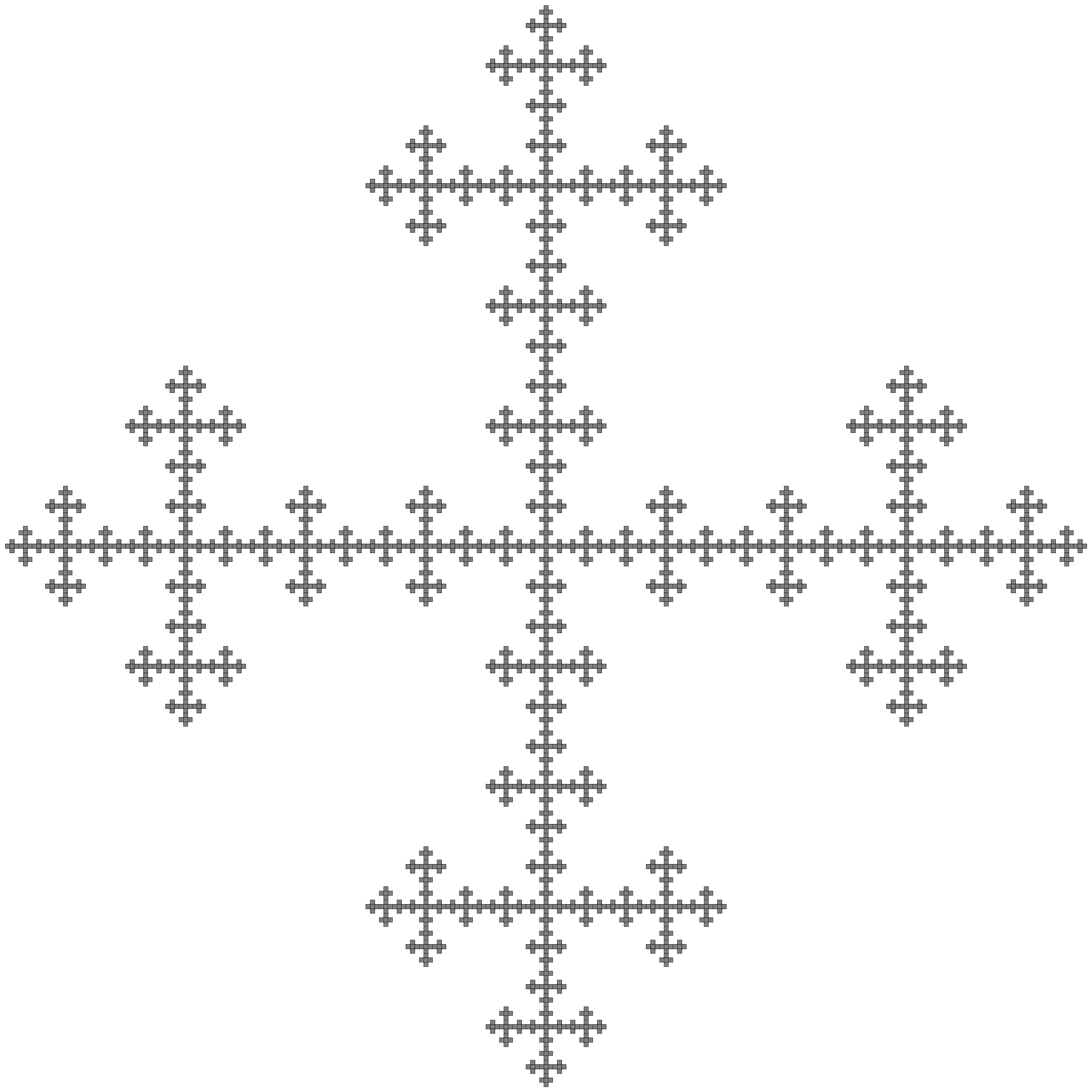}} & 
    \raisebox{-\totalheight}{\includegraphics[scale=0.5]
    {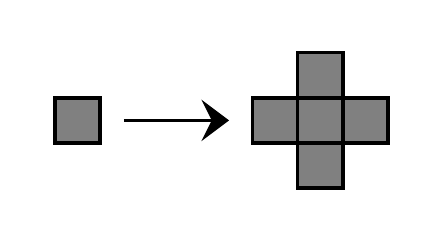}} & 
    \ \newline $\frac{\log(5)}{\log(3)} \approx 1.46$\\
    \hline
    Empty-Bottles (Custom) &
    \raisebox{-\totalheight}{\includegraphics[scale=0.03]
    {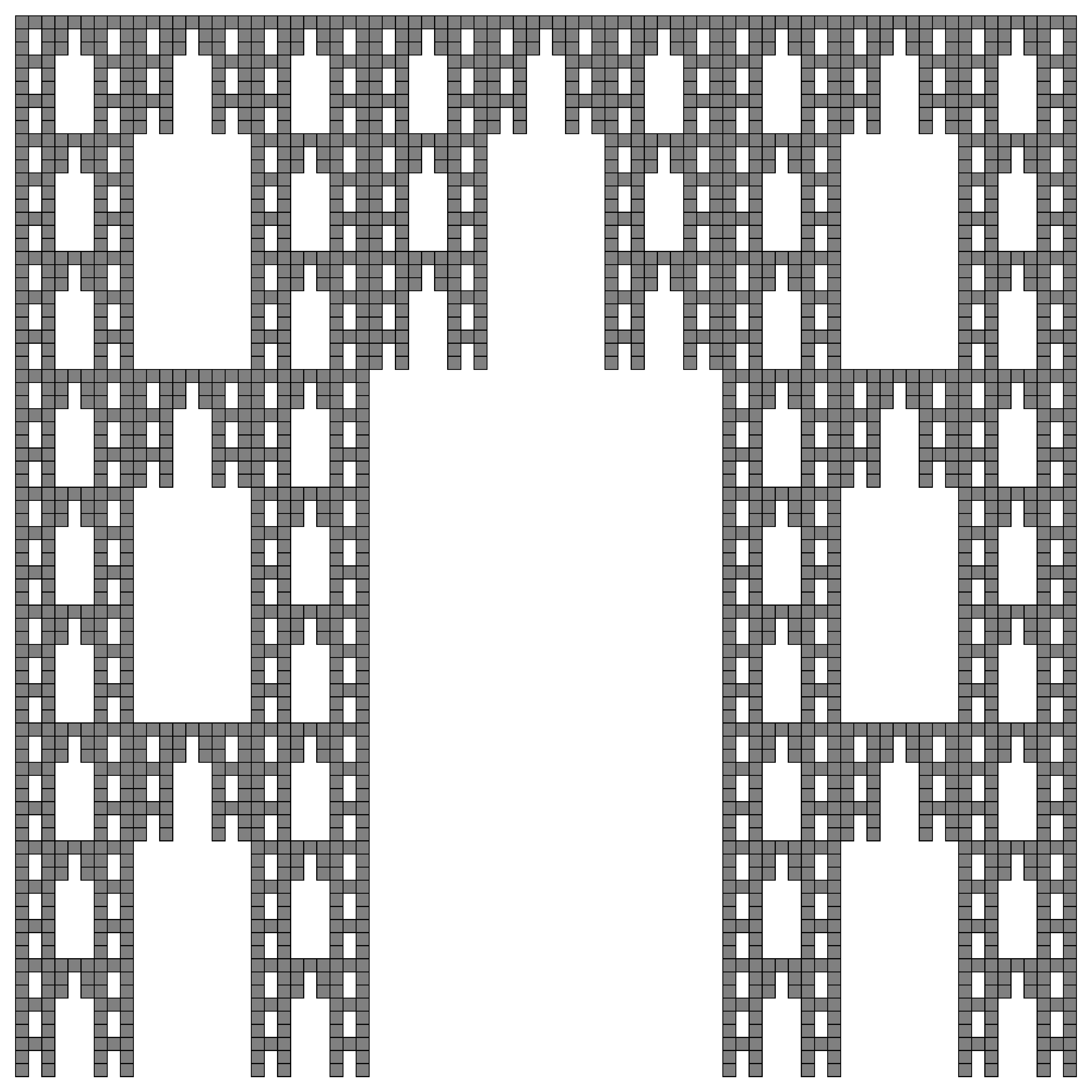}} & 
    \raisebox{-\totalheight}{\includegraphics[scale=0.5]
    {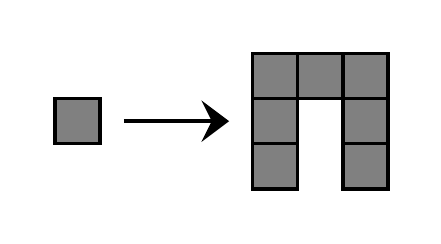}} & 
    \ \newline $\frac{\log(7)}{\log(3)} \approx 1.77$\\
    \hline
    Cantor set &
    \raisebox{-\totalheight}{\includegraphics[scale=0.010]
    {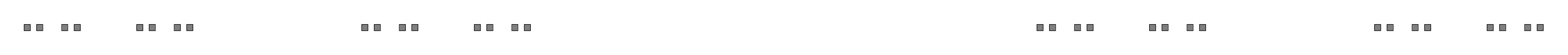}} & 
    \raisebox{-\totalheight}{\includegraphics[scale=0.5]
    {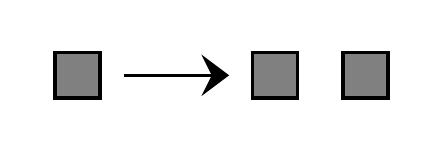}} & 
    \ \newline $\frac{\log(2)}{\log(3)} \approx 0.63$\\
    \hline
    \end{tabular}
    \caption{Example fractals of the NBB family.}
    \label{tbl:discrete-fractals}
    \end{center}
\end{table}

The notation $\fracnot{n}{k}{s}$ is introduced to denote an embedded 2D fractal of the NBB family, where $n \in \mathbb{N}$ is its linear size in one axis, $k \in \mathbb{N}$ the number of self-similar replicas for the next recursive step and $s \in \mathbb{N}$ the scale-up factor between a given scale level and the upcoming one, along each dimension\footnote{It is important to mention that these definitions work for irregular fractals as well, as in the case of the Chandelier fractal, where it can have two definitions; $\fracnot{n_x}{4}{3}$ or $\fracnot{n_y}{4}{2}$ along the $x$ and $y$ axis, respectively.}(for example, for the H-fractal $k=7, s=3$ and for the Candy fractal $k=12, s=4$). The space used by a fractal, denoted as $\mathcal{V}(\fracnot{n}{k}{s})$ may be expressed recursively as
\begin{equation}
    \mathcal{V}(\fracnot{n}{k}{s}) = \sum_{i=1}^{k} {\mathcal{V}_i(\fracnot{sn}{k}{s})} 
    \label{eq_fractal_description}
\end{equation}
with $\mathcal{V}(\fracnot{1}{k}{s}) = 1$ being the limit condition of the recursion. Since $k$ is fixed, and $n$ scales up by factors of $s$, the volume may expressed as
\begin{equation}
    \mathcal{V}(\fracnot{n}{k}{s}) = k^r
\end{equation}
where $r = \log_{s}(n)$ is defined as the scale level.

\subsection{Dimension and Packing of NBB Fractals}
The following Lemma guarantees that the dimensionality of NBB fractals, in their $\mathcal{V}(\fracnot{n}{k}{s}) = k^r$ form, is actually fractal.
\begin{lemm}
\label{lemma_hausdorff}
The space occupied by an NBB fractal is in correspondence with its Hausdorff dimension in the scale-up limit.
\end{lemm}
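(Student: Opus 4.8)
The plan is to reduce the statement to a single exponent identity combined with the classical similarity-dimension argument for self-similar sets. First I would make precise what "correspondence" is being claimed: that the discrete cell count $\mathcal{V}(\fracnot{n}{k}{s}) = k^r$, with $r = \log_s(n)$, grows as $n^{\mathbb{H}}$, where $\mathbb{H} = \frac{\log k}{\log s}$ is the Hausdorff dimension of the fractal obtained in the scale-up limit; equivalently, the box-counting exponent of the discretization at resolution $1/n$ tends to $\mathbb{H}$ as $n \to \infty$.

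The algebraic core is immediate. Using $r = \log_s(n)$ and the change-of-exponent identity $a^{\log_b c} = c^{\log_b a}$,
\begin{equation}
\mathcal{V}(\fracnot{n}{k}{s}) = k^{r} = k^{\log_s n} = n^{\log_s k} = n^{\frac{\log k}{\log s}} = n^{\mathbb{H}}.
\end{equation}
Hence at resolution $\varepsilon = 1/n$ the number of occupied unit cells is $N(\varepsilon) = \varepsilon^{-\mathbb{H}}$, so $\lim_{\varepsilon \to 0} \frac{\log N(\varepsilon)}{\log(1/\varepsilon)} = \mathbb{H}$, which is exactly the box-counting dimension of the limit set.

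It remains to identify this exponent with the Hausdorff dimension. Here I would invoke the defining property of the NBB family: the limit set $F$ is the attractor of an iterated function system of $k$ similarities, each of ratio $1/s$, and the non-overlapping-boxes restriction provides an open set (the interior of the embedding box) witnessing the open set condition. By Moran's / Hutchinson's theorem, the Hausdorff dimension of $F$ equals the similarity dimension $d$ solving $k\,(1/s)^{d} = 1$, i.e. $d = \frac{\log k}{\log s} = \mathbb{H}$; and for self-similar sets satisfying the open set condition the Hausdorff and box-counting dimensions coincide, closing the loop with the computation above and establishing the correspondence in the $n \to \infty$ limit.

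The main obstacle, and essentially the only step requiring care, is the last one: checking that every member of the NBB family genuinely satisfies the open set condition rather than merely having "essentially disjoint" pieces, and handling anisotropic cases (e.g. the Chandelier, with different scale factors along each axis) where one replaces the exact power by two-sided bounds $c_1 n^{\mathbb{H}} \le \mathcal{V}(\fracnot{n}{k}{s}) \le c_2 n^{\mathbb{H}}$. Since the non-overlap restriction is imposed precisely on the replicas' bounding boxes, whose interiors are open, pairwise disjoint, and carried into the parent box by the IFS, the open set condition holds by construction, so this obstacle is one of bookkeeping rather than substance.
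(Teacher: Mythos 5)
Your proposal is correct and its algebraic core --- the change-of-exponent identity $k^{\log_s n} = n^{\log_s k} = n^{\mathcal{H}}$ --- is exactly the paper's entire proof, which simply rearranges $\mathcal{V}(\fracnot{n}{k}{s}) = k^r$ into $n^{\mathcal{H}}$ and identifies the exponent with $\frac{\log(k)}{\log(s)}$ by definition. The additional material you supply (Moran/Hutchinson's theorem and the open set condition, witnessed by the disjoint interiors of the non-overlapping replica boxes) is a genuine strengthening the paper omits, since the paper treats $\mathcal{H} = \frac{\log(k)}{\log(s)}$ as the \emph{definition} of the Hausdorff dimension rather than proving the similarity dimension coincides with the measure-theoretic one; your version closes that gap but is not required for the lemma as the paper intends it.
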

\begin{proof}
    The space occupied by an NBB fractal is 
    $\mathcal{V}(\fracnot{n}{k}{s}) = k^r$.  Given that $r =
    \log_{s}(n)$ and $k^{log_{s}(n)} = (s)^{\log_{s}(k)\log_{s}(n)}$, the space expression
    can be rearranged into
    \begin{equation} 
    \mathcal{V}(\fracnot{n}{k}{s}) = n^{\log_{s}(k) = \frac{\log(k)}{\log(s)}} = n^{\mathcal{H}}
    \end{equation}
    where the exponent $\mathcal{H}$ is the Hausdorff dimension, \textit{i.e.}, the quotient of the logarithm of the number of replicas and the logarithm of the scaling factor.
\end{proof}
Lemma \ref{lemma_hausdorff} guarantees that discrete embedded 2D fractals, which have a lower bound in scale and can only grow by scaling up, still exhibit their Hausdorff dimension when $n \mapsto \infty$. Another useful fact is that by having one GPU thread per unitary element of the fractal is already resource-efficient as it would yield a fractal space occupancy in the parallel space as well. The next Lemma relates the geometries of parallel-spaces with fractal domains.
\begin{lemm}
\label{lemma_regular}
    A NBB fractal $\fracnot{n}{k}{s}$ can pack into a $2$-orthotope $\Pi^2$ of dimensions $k^{\lceil \frac{r}{2} \rceil} \times k^{\lfloor \frac{r}{2} \rfloor}$ at any scale level $r$.
\end{lemm}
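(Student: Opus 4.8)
The plan is to prove Lemma~\ref{lemma_regular} by induction on the scale level $r$, constructing the packing explicitly via the \emph{alternating unrolling} of the recursive NBB definition. First I would record the elementary identity $k^{\lceil r/2\rceil}\cdot k^{\lfloor r/2\rfloor} = k^{\lceil r/2\rceil + \lfloor r/2\rfloor} = k^{r}$, which shows that the proposed $2$-orthotope has exactly $k^{r} = \mathcal{V}(\fracnot{n}{k}{s})$ cells. Hence the claim is in fact that the $k^{r}$ fractal elements can be placed in \emph{bijective} correspondence with the cells of $\Pi^{2}$, i.e. a gap-free packing; this is the strongest reading of the statement and also the cleanest to carry through the induction, since the hypothesis ``packs into'' and ``fills exactly'' coincide here.

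For the base case $r=0$ the fractal is a single unit element and $k^{\lceil 0/2\rceil}\times k^{\lfloor 0/2\rfloor}=1\times 1$, so the statement is immediate. For the inductive step I would use the recursive description in Eq.~\ref{eq_fractal_description}: the level-$r$ fractal is the disjoint union of $k$ copies of the level-$(r-1)$ fractal, and by the NBB non-overlap property these replicas are genuinely distinct, so their element counts add up consistently with $\mathcal{V}(\fracnot{n}{k}{s})=k^{r}$. By the induction hypothesis each copy packs bijectively into a box of size $k^{\lceil (r-1)/2\rceil}\times k^{\lfloor (r-1)/2\rfloor}$. I would then concatenate the $k$ such boxes along the $x$-axis when $r$ is odd, and stack them along the $y$-axis when $r$ is even, obtaining a gap-free orthotope; a short parity check using $\lceil (r-1)/2\rceil+1=\lceil r/2\rceil$ for odd $r$ and $\lfloor (r-1)/2\rfloor+1=\lfloor r/2\rfloor$ for even $r$ (with the complementary coordinate left unchanged) shows the resulting dimensions are exactly $k^{\lceil r/2\rceil}\times k^{\lfloor r/2\rfloor}$. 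Iterating the alternation keeps the two side lengths within a factor of $k$ of each other, which is precisely the balanced shape claimed.

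The main obstacle is not analytic but a matter of bookkeeping clarity: one must keep clearly separated the fractal's \emph{native} embedding in $\mathbb{Z}^{2}$ (in which the $k$ replicas sit at their prescribed NBB translations and are generally not contiguous) from the \emph{re-indexed} packing produced by the lemma, in which we are free to re-tile the $k$ already-packed sub-boxes densely. The only property of the native embedding actually used is the non-overlap condition, which guarantees the replica count is exactly $k$ so that cell counts match level by level; the geometric layout inside $\Pi^{2}$ is entirely our construction. I would close by noting that the alternating choice of unrolling axis is exactly what will make this packing usable by the later map $\lambda(\omega)$, and that the same induction read bottom-up yields an explicit address translation between a cell $(x,y)\in\Pi^{2}$ and the corresponding fractal element.
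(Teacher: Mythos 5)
Your proposal is correct and follows essentially the same route as the paper: induction on the scale level $r$ with the unrolling axis alternating by parity, so that one side length is multiplied by $k$ at each step and the box stays regular or quasi-regular. The extra bookkeeping you supply (the cell-count identity $k^{\lceil r/2\rceil}\cdot k^{\lfloor r/2\rfloor}=k^{r}$, the explicit parity arithmetic, and the separation of the native embedding from the re-indexed packing) only makes the paper's argument more explicit, not different.
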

\begin{proof}
    By induction on $r$:
    \begin{itemize}
        \item{Base case:} At scale $r=0$ the fractal has a space of $\mathcal{V}(\fracnot{1}{k}{s}) = 1$ element that packs into a regular
            $2$-orthotope of $1 \times 1 = k^{\lceil \frac{0}{2} \rceil} \times k^{\lfloor \frac{0}{2} \rfloor }$ satisfying $k^{\lceil \frac{r}{2} \rceil} \times k^{\lfloor \frac{r}{2} \rfloor}$.
        \item{Induction step:} It is assumed that the orthotope at scale level $r$ is quasi-regular or regular. If $r$ is even, the packing for
            $r+1$ will scale by $k$ in the horizontal dimension of the $2$-orthotope. If $r$ is odd, the packing for $r+1$ will scale by $k$ in the vertical dimension of $\Pi^2$. Since even and odd must alternate, the dimensions of the packed $2$-orthotope $\Pi^2$ for $r+1$ can only be $k \cdot k^{\lceil{\frac{r}{2}}\rceil}  \times k^{\lfloor{\frac{r}{2}}\rfloor}$ for even $r$, or $k^{\lceil{\frac{r}{2}}\rceil} \times k\cdot k^{\lfloor{\frac{r}{2}}\rfloor}$ for odd $r$, 
            which is quasi-regular or regular, respectively.
    \end{itemize}
\end{proof}
In GPU Computing the parallel space (grid, blocks, threads) can only be defined as Euclidean boxes in 1D, 2D or 3D, which becomes a constraint when processing NBB fractals. In this context, Lemma \ref{lemma_regular} gives useful insights on how one could map the Euclidean space onto the Fractal one.

\subsection{Changing from Thread-space to Block-space}
An important aspect to consider is at which level the parallel space will be mapped. Two approaches are possible; (1) thread-space mapping and (2) block-space mapping. Let $\lambda(\omega)$ be the map from GPU parallel-space (Euclidean) to embedded 2D fractal space. For the first approach, $\lambda(\omega)$ defines $\omega$ as a unique thread location in parallel space.  For the second approach, $\lambda(\omega)$ defines
$\omega$ as a block coordinate in which several threads are contained. The block-space approach has three important advantages over thread-space mapping. First, in block-space, the fractal becomes a coarsened version of the original, requiring fewer elements to be mapped. Second, since the fractal is a simplified
version of itself, it is possible to work on higher sizes of $n$ before the CUDA grid maximum dimensions or numerical limits are reached. Third, the block-space approach allows the
possibility for threads inside a block to preserve locality, which is essential for doing efficient coalesced memory accesses on GPU global memory.

The next Section formulates $\lambda(\omega)$ with $\omega = (\omega_x,\omega_y)$ being the two-dimensional block coordinate of constant size $|B| = \rho \times \rho$ threads.  The change from thread-space to block-space means that blocks are mapped to a simplified version of the fractal of linear size $n_b = n/b$ with $b=\rho$.

\section{Formulation of GPU map $\lambda(\omega)$}
\label{sec:formulation-lambda}
The function $\lambda: \mathbb{Z}_{\mathbb{E}}^{2} \mapsto
\mathbb{Z}_{\mathbb{F}}^{2}$ is introduced as a mapping of block coordinates $\omega$ from GPU parallel-space $\Pi^2$, which lies in Euclidean space $\mathbb{Z}_\mathbb{E}$, onto block coordinates in the embedded fractal space $\mathbb{Z}_{\mathbb{F}}$. The intuition behind the formulation of $\lambda(\omega)$ is an 
unrolling process applied in parallel to each $\omega \in \Pi^2$ through all the scale levels of the fractal in the scale-down direction until the unit scale limit is reached. At each scale level, different $\Delta_x,\Delta_y$ offsets are accumulated to form the final $(\lambda_x(\omega), \lambda_y(\omega))$ coordinate in the embedded domain of the fractal.
\begin{thma}
\label{theorem_lambda}
    There exists $\lambda(\omega)$ that maps a GPU parallel-space of size $|\Pi^2| = \mathcal{O}(n^\mathcal{H})$ to any NBB fractal in $\mathcal{O}(\log_2 \log_2 (n_b))$ time using $|B| = \mathcal{\theta}(\frac{\log_2(n_b)}{\log_2 \log_2(n_b)})$ threads per block. 
\end{thma}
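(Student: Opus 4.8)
The plan is to prove Theorem~\ref{theorem_lambda} constructively, in three stages: (i) write down $\lambda(\omega)$ explicitly as an unrolling over the $r=\log_s(n_b)$ scale levels of the block-coarsened fractal $\fracnot{n_b}{k}{s}$; (ii) verify that this unrolling is a bijection onto the block-cells of $\fracnot{n_b}{k}{s}$ and derive the parallel-space bound from Lemmas~\ref{lemma_hausdorff} and~\ref{lemma_regular}; and (iii) bound the evaluation time, first for a single thread and then for a block of cooperating threads.

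For stage (i), I would index each block-cell of $\fracnot{n_b}{k}{s}$ by a digit string $(d_1,\dots,d_r)\in\{0,\dots,k-1\}^r$, where $d_i$ chooses one of the $k$ replicas at scale level $i$ (level $1$ coarsest, level $r$ the unit scale). Let $(a^x_j,a^y_j)$ be the offset of replica $j$ inside its $s\times s$ parent grid; the NBB non-overlap condition makes the map $j\mapsto(a^x_j,a^y_j)$ injective. The embedded coordinate of the cell $(d_1,\dots,d_r)$ is then $\lambda_x(\omega)=\sum_{i=1}^{r}\Delta^{(i)}_x$, $\lambda_y(\omega)=\sum_{i=1}^{r}\Delta^{(i)}_y$ with $\Delta^{(i)}_x=a^x_{d_i}\,s^{\,r-i}$ and $\Delta^{(i)}_y=a^y_{d_i}\,s^{\,r-i}$. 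On the parallel-space side, following the alternating packing of Lemma~\ref{lemma_regular}, the odd-indexed digits $d_1,d_3,\dots$ are read off the base-$k$ expansion of $\omega_x$ (which carries $\lceil r/2\rceil$ digits) and the even-indexed digits $d_2,d_4,\dots$ off that of $\omega_y$ ($\lfloor r/2\rfloor$ digits); concretely $d_i$ is obtained by one integer division of $\omega_x$ or $\omega_y$ by a power of $k$ followed by a remainder mod $k$, so every $\Delta^{(i)}$ is a closed-form function of $\omega$ and $i$ alone.

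For stage (ii), digit extraction is a bijection between $\Pi^2=\{0,\dots,k^{\lceil r/2\rceil}-1\}\times\{0,\dots,k^{\lfloor r/2\rfloor}-1\}$ and $\{0,\dots,k-1\}^r$, and the accumulation $\sum_i\Delta^{(i)}$ is exactly the recursion of Equation~\eqref{eq_fractal_description}; reading $\lambda_x(\omega)$ as a base-$s$ numeral recovers the sequence $(a^x_{d_i})_i$, and together with $\lambda_y(\omega)$ it recovers $((a^x_{d_i},a^y_{d_i}))_i$ and hence each $d_i$, so $\lambda$ is injective, and an induction on $r$ (as in the proof of Lemma~\ref{lemma_regular}) shows its image is precisely the block-cell set of $\fracnot{n_b}{k}{s}$. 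The space bound is then immediate: by Lemma~\ref{lemma_regular} the block grid is the orthotope $k^{\lceil r/2\rceil}\times k^{\lfloor r/2\rfloor}$, so $|\Pi^2|=k^{r}=n_b^{\mathcal{H}}$ by Lemma~\ref{lemma_hausdorff}, which is $\mathcal{O}(n^{\mathcal{H}})$ since $n_b\le n$.

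For stage (iii), a single thread summing all $r$ terms of $\lambda(\omega)$ costs $\Theta(r)=\Theta(\log_2 n_b)$ (the powers $s^{\,r-i}$ being maintained incrementally, or computed in $\mathcal{O}(\log r)$ by repeated squaring). To reach $\mathcal{O}(\log_2\log_2 n_b)$ I would split the $r$ terms among $p$ cooperating threads of the block: each thread first computes the power it needs in $\mathcal{O}(\log r)$ time and then sums its $\mathcal{O}(r/p)$ assigned levels — legitimate because the $\Delta^{(i)}$ do not depend on one another — after which the $p$ partial sums are combined by a binary tree of depth $\lceil\log_2 p\rceil$. By the work--time (Brent) argument the total time is $\mathcal{O}(\log r+r/p+\log p)$, which at the work-optimal choice $p=\Theta(r/\log_2 r)$ equals $\mathcal{O}(\log_2 r)=\mathcal{O}(\log_2\log_2 n_b)$ with $|B|=p=\Theta\!\big(\log_2 n_b/\log_2\log_2 n_b\big)$ threads per block. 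The step I expect to require the most care is not the asymptotics but making the bijection of stage (ii) uniform over the whole NBB family: one must check that the alternating split of digits between $\omega_x$ and $\omega_y$ stays consistent with how each fractal's replica offsets populate its $s\times s$ parent grid — including irregular members such as the Chandelier, where the two axes have different scale factors — and that the induction handles the parity of $r$ exactly as Lemma~\ref{lemma_regular} does; the parallel-reduction step is then a routine scheduling argument.
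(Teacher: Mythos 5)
Your construction is essentially the paper's own proof: the alternating extraction of base-$k$ digits from $\omega_x$ and $\omega_y$ is exactly the paper's helper index $\beta_\mu(\omega)$, your replica-offset map $j\mapsto(a^x_j,a^y_j)$ is its perfect hash table $H[\,\cdot\,]$, the scaled sum $\sum_i\Delta^{(i)}$ is Eqs.~(\ref{eq:lambda-x})--(\ref{eq:lambda-y}), and the Brent-style parallel reduction with $p=\Theta(\log_2 n_b/\log_2\log_2 n_b)$ threads is the same closing step. The argument is correct; the extra care you take with the bijectivity of the map and the incremental computation of the powers $s^{r-i}$ only makes explicit what the paper leaves implicit.
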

\begin{proof}
    \textit{By construction}: let $r_b = \log_{s}({n_b})$ be the block-space scale level of the fractal, $\Pi^2$ the $2$-orthotope of $k^{\lceil \frac{r_b}{2} \rceil} \times k^{\lfloor \frac{r_b}{2} \rfloor}$ blocks that maps onto the discrete embedded 2D fractal $\fracnot{n_b}{k}{s}$, with each block having
    $b \times b$ threads. By Lemma (\ref{lemma_hausdorff}), $\Pi^2$ is parallel-space efficient in block-space, \textit{i.e.}, $|\Pi^2| = \mathcal{O}(n^\mathcal{H})$. A helper index function $\beta_\mu(\omega)$ is defined as
    \begin{equation}
        \beta_\mu(\omega) = \Big( \frac{\omega_x(\mu \mod 2) + \omega_y((\mu+1 )\mod 2)}{k^{\lceil \frac{\mu}{2} \rceil-1}}\Big) \mod k
    \end{equation}
    to generate indices in the range $\beta_\mu(\omega) \in [0, k-1]$ that identifies, within scale level $\mu \in [0..r_b]$, which of the $k$ regions of the fractal does block $\omega$ belongs to. For even $\mu$, $\beta_\mu(\omega)$ acts on $\omega_x$. For odd $\mu$, it acts on $\omega_y$. 
    
    An arbitrary numbering is chosen for associating the $k$ blocks of the fractal's NBB step with the $k$ different $\beta_{\mu}(\omega)$ values\footnote{For example, for the Vicsek fractal, where $k=4$, the regions can follow a numbering of the form top (0), bottom (1), left (2) and right (3). For the Sierpinski gasket, where $k=3$, a valid numbering could be top (0), bottom (1) and right (2). In the same way, the Candy fractal would require a numbering for its $k=12$ replicas.}. A perfect hash table\footnote{The hash table may be replaced by an arithmetic expression, yielding the same values.} $H[\ ]$ of size $k$ can be used to map the $k$ values of $\beta_\mu(\omega)$ to $(\tau_x^\mu,\tau_y^\mu)$ replica offsets of the form
    \begin{align}
        \tau^{\mu} = H[\beta_\mu(\omega)] = (\tau_x^\mu, \tau_y^\mu),\ \  \tau_x^u, \tau_y^u \in [0..s-1].
    \end{align}
    The replica offsets combined with the corresponding fractal replica side length $k^{\mu-1}$, gives the corresponding offset in embedded space
    \begin{align} 
        \Delta^\mu = (\tau_x^{\mu}(s)^{\mu-1}, \tau_y^{\mu}(s)^{\mu-1}) = (\Delta_x^\mu, \Delta_y^\mu)
        \label{eq:deltas}
    \end{align}
    that contributes to the final mapped coordinate. The summation of all partial coordinates produces the map 
    \begin{align}
        \lambda(\omega) &= (\lambda_x(\omega), \lambda_y(\omega)), 
        \label{eq:lambda}\\
        \lambda_x(\omega) &= \sum_{\mu=1}^{\log_{s}(n_b)} \Delta_x^\mu
        \label{eq:lambda-x}\\
        \lambda_y(\omega) &= \sum_{\mu=1}^{\log_{s}(n_b)} \Delta_y^\mu
        \label{eq:lambda-y}
    \end{align}
    which can be computed in $\mathcal{O}(\log_2\log_2(n))$ time (\textit{i.e.,} $n_b \in \theta(n)$) using a parallel reduction with the threads contained
    in the $\omega$ block. Finally, by Brent's Theorem \cite{Brent:1974:PEG:321812.321815}, $|B| = \mathcal{\theta}\Big(\frac{\log_2(n)}{\log_2 \log_2(n)}\Big)$ threads are sufficient for a block of threads to reduce efficiently in parallel. 
\end{proof} 
Theorem \ref{theorem_lambda} guarantees the existence of an efficient $\lambda(\omega)$ map for any NBB fractal. It is important to mention that the hash table is of fixed size $k$ and the same table is reused at every scale level. In practice this hash table may be defined at compile time as a static resource, or as a GPU shared memory constant array for a whole block of threads. For some fractals it is possible to replace the hash table for an arithmetic hash function that returns the replica offsets directly.

\begin{thma}
\label{theorem_speedup}
    Processing a NBB fractal with $\lambda(\omega)$ requires asymptotically less work than using a bounding-box approach. 
\end{thma}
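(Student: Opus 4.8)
The plan is to compare the two schemes by their \emph{total work} $W$ --- the processor--time product summed over all launched threads --- rather than by raw thread count, because $\lambda(\omega)$ is not the identity and its evaluation cost must be charged somewhere. First I would pin down the baseline. The bounding-box scheme covers the whole $n \times n$ embedding with an orthotope, so it launches $\Theta(n^2)$ threads, each doing $\Theta(1)$ work: the per-element payload if the thread lands inside the fractal, or a constant-time discard test otherwise. Hence $W_{\mathrm{BB}} = \Theta(n^2)$ (the same count holds in block-space, where a $(n/\rho)\times(n/\rho)$ orthotope of $\rho^2$-thread blocks again totals $n^2$ threads).

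Next I would bound $W_\lambda$ using the results already in hand. By Lemma \ref{lemma_hausdorff} and Theorem \ref{theorem_lambda}, working in block-space the scheme launches $|\Pi^2| = \mathcal{O}(n_b^{\mathcal{H}})$ blocks with $n_b = n/\rho$; each block evaluates its single coordinate $\lambda(\omega)$ by an intra-block parallel reduction over the $r_b = \log_s(n_b) = \Theta(\log_2 n)$ scale levels, which runs in $\mathcal{O}(\log_2\log_2 n)$ time on $|B| = \Theta\!\big(\log_2 n / \log_2\log_2 n\big)$ threads. By Brent's theorem (as invoked in Theorem \ref{theorem_lambda}) this reduction is work-efficient, so it contributes $\Theta(\log_2 n)$ operations per block; the $|B|$ threads then each perform $\Theta(1)$ payload work, adding $\Theta(|B|) = o(\log_2 n)$ per block. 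Summing over all blocks, $W_\lambda = \mathcal{O}\!\big(n_b^{\mathcal{H}}\log_2 n\big) = \mathcal{O}\!\big(n^{\mathcal{H}}\log_2 n\big)$ since $n_b \le n$.

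The last ingredient is that every genuine NBB fractal has $\mathcal{H} < 2$. By the non-overlap restriction defining the family, the $k$ replica bounding boxes --- each of linear size $N/s$ inside a parent box of side $N$ --- are disjoint, so a simple area count gives $k\,(N/s)^2 \le N^2$, i.e. $k \le s^2$, with strict inequality unless the fractal degenerates into a fully filled square; hence $\mathcal{H} = \log(k)/\log(s) < 2$ for all the fractals of Table \ref{tbl:discrete-fractals}. Combining, $W_\lambda = \mathcal{O}(n^{\mathcal{H}}\log_2 n) = o(n^2) = o(W_{\mathrm{BB}})$, and in fact $W_{\mathrm{BB}}/W_\lambda = \Omega\!\big(n^{2-\mathcal{H}}/\log_2 n\big) \to \infty$, which is the claimed asymptotic (indeed polynomial, hence sub-exponential) reduction in work.

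I expect the one delicate point to be exactly this work accounting of the map evaluation. One must insist on the intra-block parallel reduction of Theorem \ref{theorem_lambda}, so that the $\Theta(\log_2 n)$ cost of the unrolling is absorbed by the $|B|$ threads \emph{already present} in each block; charging it naively per thread, or spawning extra threads to pay for it, would reintroduce polylogarithmic (or worse) factors that could, for fractals with $\mathcal{H}$ close to $2$, eat into the $n^{2-\mathcal{H}}$ margin. Keeping the $\rho$ / block-size bookkeeping tight enough that the map-evaluation term and the $\Theta(|B|)$ payload term both remain genuinely lower-order than $n^{\mathcal{H}}$ is the crux of the argument; everything else is a direct substitution of the earlier lemmas.
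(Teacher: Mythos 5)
Your proof is correct and reaches the paper's conclusion by essentially the same comparison ($n^2$ work for the bounding box versus $n^{\mathcal{H}}$ times a polylogarithmic factor for $\lambda(\omega)$), but it differs in two worthwhile ways. First, the cost model: the paper forms the ratio $S_{\lambda(\omega)} = \mathcal{O}(1)\mathcal{V}(\Pi^2_{BB}) / \big(\mathcal{O}(\log_2\log_2 n)\,\mathcal{V}(\Pi^2_{\lambda(\omega)})\big)$, i.e.\ it charges each block only the \emph{parallel time} $\mathcal{O}(\log_2\log_2 n)$ of the intra-block reduction, whereas you charge the full processor--time product, $\Theta(\log_2 n)$ operations per block (the $|B|$ Brent threads times the reduction depth, plus the $\Theta(|B|)$ payload). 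Your accounting is the more conservative and, arguably, the more honest notion of ``work''; since $\log_2 n = o(n^{2-\mathcal{H}})$ the conclusion survives either way, and the paper's looser bound only makes its ratio larger. Second, you supply a step the paper leaves implicit: the limit $\lim_{n\to\infty}(2-\mathcal{H})n^{2-\mathcal{H}}\log_2 n = \infty$ requires $\mathcal{H} < 2$ strictly, and your disjoint-boxes area count $k(N/s)^2 \le N^2$, hence $k \le s^2$ with equality only for the degenerate filled square, is exactly the missing justification that every non-degenerate NBB fractal satisfies this. That observation genuinely completes the argument rather than merely restating it, and would be a worthwhile addition to the paper's proof (with the caveat that for anisotropic members of the family such as the Chandelier fractal, where the two axes scale by different factors, the area bound should be stated as $k \le s_x s_y$).
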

\begin{proof}
   %
    The asymptotic work improvement factor of $\lambda(\omega)$ with respect to the bounding-box approach is the quotient of the costs of mapping all
    blocks using their corresponding $\Pi^2$ structures with the consideration $n_b \in \theta(n)$
    \begin{align}
        S_{\lambda(\omega)} &=      \frac{\mathcal{O}(1) \mathcal{V}(\Pi_{BB}^2)}{\mathcal{O}(\log_2\log_2(n))\mathcal{V}(\Pi^2_{\lambda(\omega)})}\\
    \end{align}
    where $\Pi^2_{BB}$ and $\Pi_{\lambda(\omega)}^2$ are the parallel-spaces for the bounding-box and $\lambda(\omega)$ approaches, respectively. The
    parallel-space of $\Pi_{BB}^2$ corresponds to the Euclidean box of $n_b \times n_b$ blocks, and the parallel-space of $\Pi_{\lambda(\omega)}^2$ is $\mathcal{O}(n^{\mathcal{H}})$ by Lemma (\ref{lemma_hausdorff}). Applying the limit $n\to\infty$ gives
    \begin{align}
        \lim_{n\to\infty}{S_{\lambda(\omega)}} &= \lim_{n\to\infty} {\frac{\frac{\partial}{\partial n}(n^{2-\mathcal{H}})}{\frac{\partial }{\partial n}(\log_2\log_2(n))}}\\
                                               &= \lim_{n\to\infty} {\frac{(2-\mathcal{H})n^{1-\mathcal{H}}}{\frac{1}{n\log_2(n)}}} = \infty
    \end{align}
\end{proof}
    The importance of Theorem (\ref{theorem_speedup}) is that it guarantees the existence of a fractal size $n > n_0$ where the speedup provided by $\lambda(\omega)$ behaves as a monotonically increasing function. The smaller the Hausdorff dimension of the fractal, the stronger the behavior. The next Section covers the possible approaches to handle intra-block mapping, \textit{i.e.}, how threads inside a block can access individual fractal locations reached by the block-space mapping.

\section{Intra-Block Mapping}
\label{sec:intra-block-mapping}
Once $\lambda(\omega)$ maps a block $\omega$, all of its threads contained share 
the same block-space mapped coordinate in embedded space which serves as a 
reference location for each thread to compute their individual location in 
the fractal. This phase of organizing the threads within a block is defined 
here as \textit{Intra-Block Mapping}, and this Section describes three 
possible approaches to accomplish it.

\subsection{Further Unrolling}
In this approach threads inside their mapped block may use the same
$\lambda(\omega)$, with the same hash table or arithmetic hash function, 
but this time applied to each thread in local space. By Theorem (\ref{theorem_lambda}), the \textit{Intra-block map} is still parallel-space efficient and the mapping time 
becomes $\mathcal{O}(\log_2 \log_2(|B|)) \in \mathcal{O}(1)$ as the size $\rho \times \rho$ of a block is constant.

\subsection{Shared Lookup Table}
This second approach is to use a shared lookup table of size $\rho \times \rho =
\mathcal{O}(1)$ holding the final offset coordinates for each thread within the same block. Mapping each thread would cost $\mathcal{O}(1)$ memory accesses and the 
extra memory introduced by the shared table is $\mathcal{O}(\rho \times \rho) 
\in \mathcal{O}(1)$.

\subsection{Bounding Sub-boxes}
The third approach consists of using the mapped blocks as bounding sub-boxes. 
This approach introduces a constant number of extra threads in each block, 
but allows each thread to be mapped just with $f(x) = x$ which costs
$\mathcal{O}(1)$. If this method is chosen, then threads require a fast method 
to know if they belong to the fractal or not. 

Regardless of which Intra-block mapping approach is chosen, the final mapping time will not surpass the $\mathcal{O}(\log_2\log_2(n))$ time, as the blocks have
a constant size of threads, regardless of the value of $n$. Still, it is worth considering the differences in the approaches; \textit{Further Unrolling} introduces a constant cost in mapping time, the \textit{Shared Lookup Table} approach introduces a constant cost in memory and the \textit{Bounding Sub-boxes} introduce a constant in the number of extra threads. Choosing one or another can depend on the specific application, \textit{i.e.}, to avoid competing with the application in the use of memory bandwidth or arithmetic operations.

\section{Case Study: The Sierpinski Gasket}
\label{sec:case-study-sierpinski}
This Section applies the formulations and approaches from Sections \ref{sec:NBB-fractals} and \ref{sec:formulation-lambda}, which were generic to all NBB fractals, now for the specific case of the Sierpinski gasket. Experimental performance results are presented for different test cases (involving different compute patterns that are frequently found in discrete simulations), using different fractal sizes of the Sierpinski gasket.

The \textit{Sierpinski Gasket}, illustrated in Figure \ref{fig_sierpinski_discrete_construction_steps}. was described by Waclaw Sierpinski in 1915. 
\begin{figure}[ht!]
\centering
\includegraphics[scale=0.40]{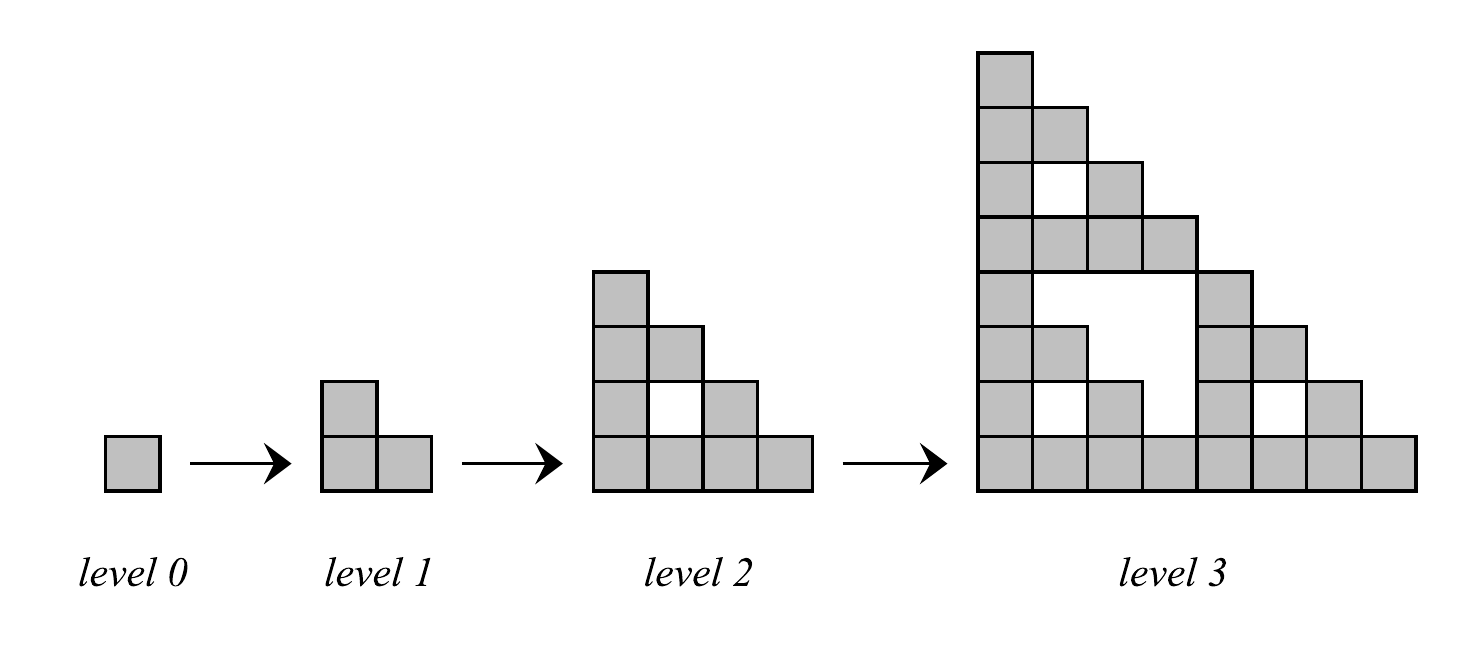}
\caption{Bottom-up construction of the discrete Sierpinski gasket.}
\label{fig_sierpinski_discrete_construction_steps}
\end{figure}

Being over a century old, this NBB fractal is still relevant as it is object of study in different fields such as the construction of antennas \cite{855489, 664115}, cellular automata \cite{Ohi2001, RevModPhys.55.601}, fractal molecular assembly \cite{shang2015}, DNA self-organization \cite{rothemund2004}, self-assembly theory \cite{Doty:2012:TAS:2380656.2380675, LATHROP2009384} and phase transitions on fractal spin lattices \cite{0305-4470-17-2-028, PhysRevLett.45.855, Mota20086095}, among others.  The Sierpinski gasket is denoted $\fracnot{n}{3}{2}$, where $k=3$ and $s=2$.   

\subsection{Defining $\lambda(\omega)$ for the Sierpinski Gasket}
The packing process of Lemma (\ref{lemma_regular}) describes an unrolling process, in which each block of threads, with coordinate $\omega$ in parallel space, accumulates a series of offsets to return a final mapped block coordinate in embedded fractal space. The specific case of the Sierpinski Gasket is illustrated in Figure \ref{fig_sierpinski_packing} where the unrolling principle is visible by the different shades that are in correspondence with the shaded replicas of the fractal. 
\begin{figure}[ht!]
    \centering
    \includegraphics[scale=0.15]{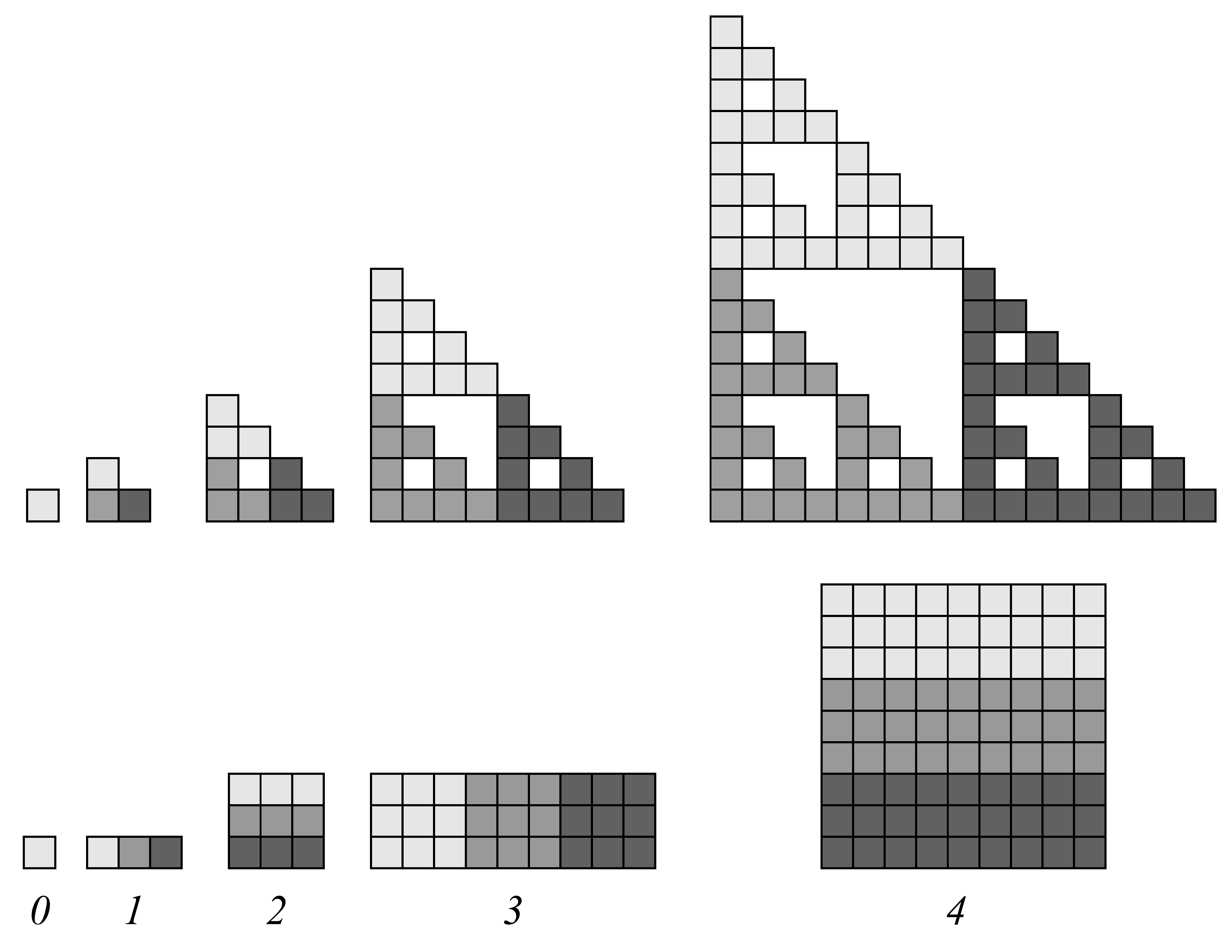}
    \caption{Each scale of the Sierpinski fractal packs into a $2$-orthotope
    $\Pi^2$ of dimensions $3^{\lceil \frac{r}{2} \rceil} \times 3^{\lfloor \frac{r}{2} \rfloor}$.}
    \label{fig_sierpinski_packing}
\end{figure}

The helper parameter $\beta_u$, for the case of the Sierpinski gasket, is
\begin{equation}
    \beta_\mu(\omega) = \Big( \frac{\omega_x(\mu \mod 2) + \omega_y((\mu+1 )\mod 2)}{3^{\lceil \frac{\mu}{2} \rceil-1}}\Big) \mod 3.
\end{equation}
Replica regions 
are numbered as $0$ (top), $1$ (middle) and $2$ (right) (see Figure \ref{fig_sierpinski_packing}, top, for visual reference). 
The hash table for this fractal is $H[0] = (0,0), H[1] = [0,1], H[2] = [1,1]$ where each pair is the corresponding replica offset. In the case of the Sierpinski gasket, it is also possible to use the following arithmetic hash function
\begin{align}
    h(\beta_\mu) = (\tau_x^\mu, \tau_y^\mu) = (\Big\lfloor \frac{\beta_\mu}{2} \Big\rfloor, \beta_\mu - \Big\lfloor \frac{\beta_\mu}{2} \Big\rfloor)
\end{align}
as an alternative to the hash table, giving the same replica offsets for each of the $x$ and $y$ directions at scale level $\mu$. The replica offsets are combined with the replica linear sizes to form the offsets in embedded space
\begin{align} 
    \Delta^\mu = (\Delta_x^\mu, \Delta_y^\mu) = (\tau_x^{\mu}2^{\mu-1}, \tau_y^{\mu}2^{\mu-1})
\end{align}
Having defined the required functions and parameters, the $\lambda(\omega)$ map for the Sierpinski gasket becomes
\begin{align}
\lambda(\omega) &= 
\Bigg(\sum_{\mu=1}^{r_b} \Delta_x^\mu, \sum_{\mu=1}^{r_b} \Delta_y^\mu \Bigg)
\label{eq:lambda-sierpinski}
\end{align}
with $r_b = \log_2(n_b)$.
By Theorem \ref{theorem_speedup}, $\lambda(\omega)$ is asymptotically faster than a bounding box approach. The theoretical parallel space improvement as well as the speedup are presented in Figure \ref{fig_theoretical_improvements}.

\begin{figure}[ht!]
\centering
\includegraphics[scale=0.72]{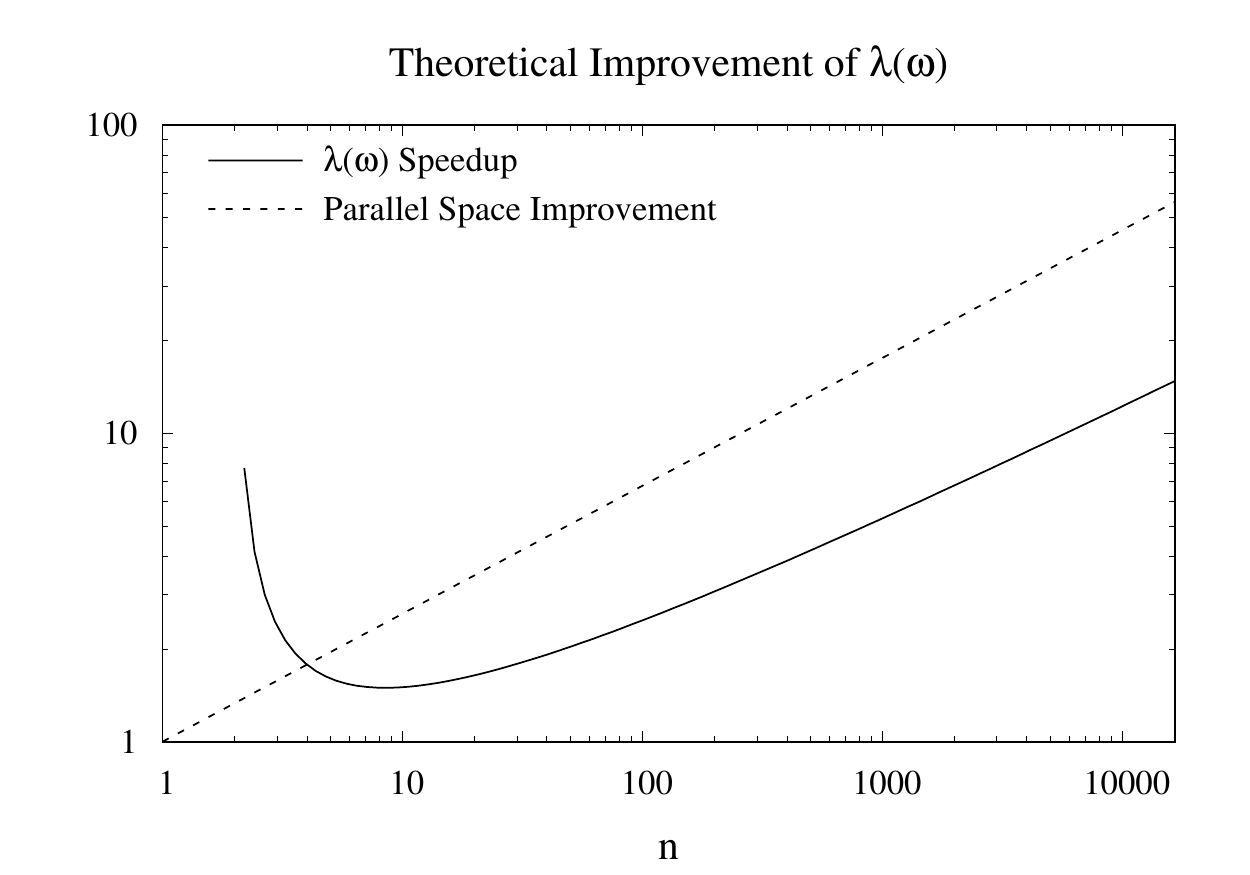}
    \caption{Theoretical improvement for parallel-space and mapping speedup for the Sierpinski gasket.}
\label{fig_theoretical_improvements}
\end{figure}
In the plot, one can observe that in theory $\lambda(\omega)$ applied to the Sierpinski gasket produces a monotonically increasing speedup starting from $n \ge n_0 = 10$. The parallel space improvement in the number of threads used has also been included (dashed lines), showing a fixed exponential rate of improvement. 

For the intra-block mapping phase, the bounding sub-boxes approach was used. In order to know if a location is part of the fractal, each thread evaluates if $t_x \mathbin{\&} (b-1-t_y) == 0$ is true or false to know if it belongs to the Sierpinski gasket or not, respectively, with $\mathbin{\&}$ being the bitwise AND operator, $b$ the dimensional block size, and $t_x$ and $t_y$ the thread's coordinate in local space. Figure \ref{fig_thread_vs_block} illustrates how the block-space map and intra-block mapping are organized compared to a thread-space map. 
\begin{figure}[ht!]
\centering
\includegraphics[scale=0.05]{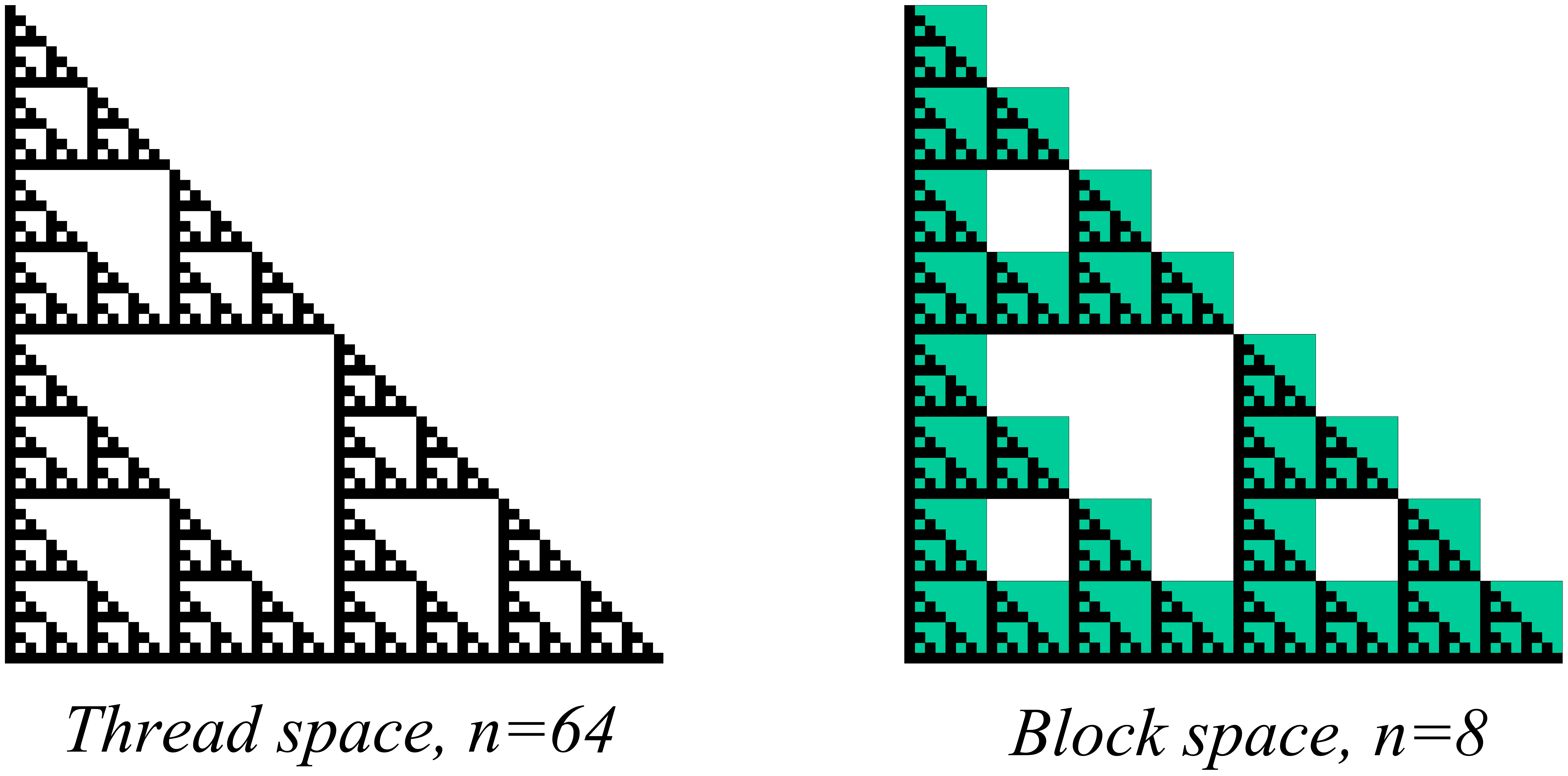}
    \caption{In thread-space mapping, threads are directly mapped one-to-one to the elements of the fractal of linear size $n=64$. In block-space mapping, $|B|
    = 8 \times 8$ and blocks of threads are mapped onto a simplified version (green) of the fractal of linear size $n_b=64/8=8$.}
\label{fig_thread_vs_block}
\end{figure}


\subsection{Implementation and Performance Results for the Sierpinski Gasket}
\label{sec_performance}
The case for the Sierpinski gasket was implemented using NVIDIA's CUDA \CC\ toolkit as a program that performs computations on the data elements of the fractal of side length $n$ (chosen at execution time) using both the bounding-box and $\lambda(\omega)$ approaches. In the case of $\lambda(\omega)$ the $x,y$ arithmetic reductions per-block coordinate $w$ from Eq. (\ref{eq:lambda-sierpinski}) are computed using the warp-shuffle parallel reduction, which allows efficient register-level communication among threads within a warp\footnote{A warp is a group of 32 threads that execute instructions in a lock-step mode and can also communicate their register data among themselves.}.
Experimental benchmarking of GPU thread maps is accompanied with work instructions in the GPU kernel to represent realistic application scenarios\footnote{Also it may not be clear if the compiler and scheduler optimizes the program, ignoring the mapping instructions, when no writes are performed on memory.}. The following three tests were designed, using different workloads:
\begin{itemize}
    \item Single write (SW): To write a constant value on all the elements of a Sierpinski gasket of scale level $r$, which is embedded in a $n\times n$ matrix initially filled with zeros. 
    \item Reduction (RD): To perform an arithmetic reduction with all the elements of the Sierpinski gasket.
    \item Cellular Automata (CA): To perform a Cellular Automaton simulation using a fractal adaptation of Conway's game of life-like rules. This adaptation still uses the Euclidean Moore neighborhood, but only considers as neighbors the cells that belong to the fractal and the cells of the empty embedded space are ignored in the neighborhood counting.
\end{itemize}

Different fractal sizes were tested in the range $r=0..16$ (up to $r = 15$ in tests RD and CA due to memory limitations), equivalent to embedding sizes of $n \times n = [\{1\times 1\}, ..., \{65536 \times 65536\}]$, and using different GPU block sizes in the range $\rho = 1,2,4,8,16,32$ in order to find the setting that provides the best performance for both the bounding-box and the $\lambda(\omega)$ approaches.
The average performance measures are taken by averaging 100 sub-averages, each one being an average time of 10 consecutive synchronized kernel calls. The standard error for each mean was below $1\%$. The hardware for performance test is listed in Table \ref{table_hardware}.
\begin{table}[ht!]
\normalsize
\caption{Hardware used for performance tests.}
\begin{center}
\begin{tabular}{|c|c|r|}
\hline
\# & Device	&	Model\\
\hline
 & GPU	&	Titan V, 5120 cuda cores 12GB \\
0 & CPU	&	Intel i7-6950X 10-core Broadwell \\
 & RAM	&	128GB DDR4 2400MHz\\
 \hline
 & GPU	&	Titan RTX, 4608 cuda cores, 24GB \\
1 & CPU	&	Intel i7-6950X 10-core Broadwell \\
 & RAM	&	128GB DDR4 2400MHz\\
\hline
\end{tabular}
\end{center}
\label{table_hardware}
\end{table}

Figure \ref{fig_performance} presents the speedup of $\lambda(\omega)$ over the bounding-box approach, as well as the running times for the two mapping techniques in all three different tests.
\begin{figure*}[ht!]
\includegraphics[scale=0.53]{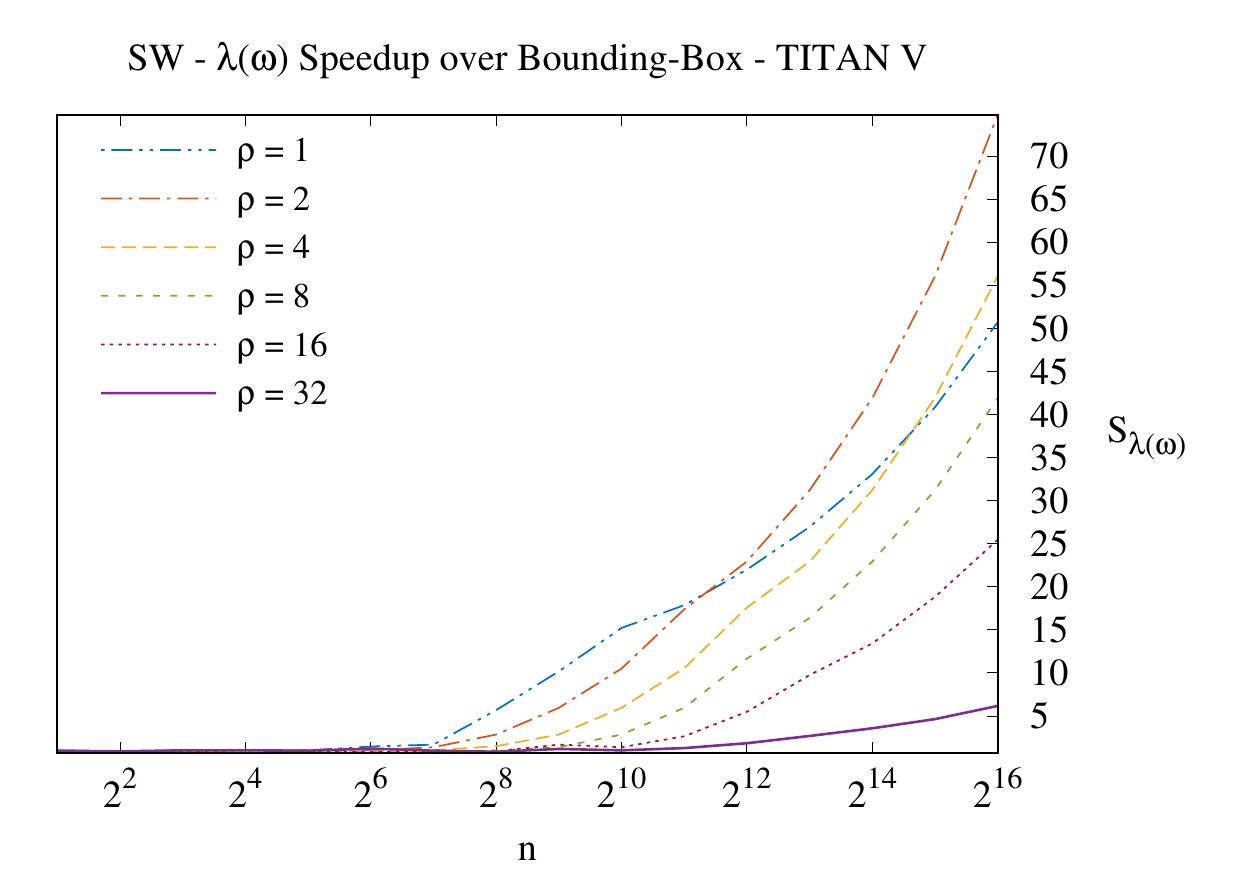}
\includegraphics[scale=0.53]{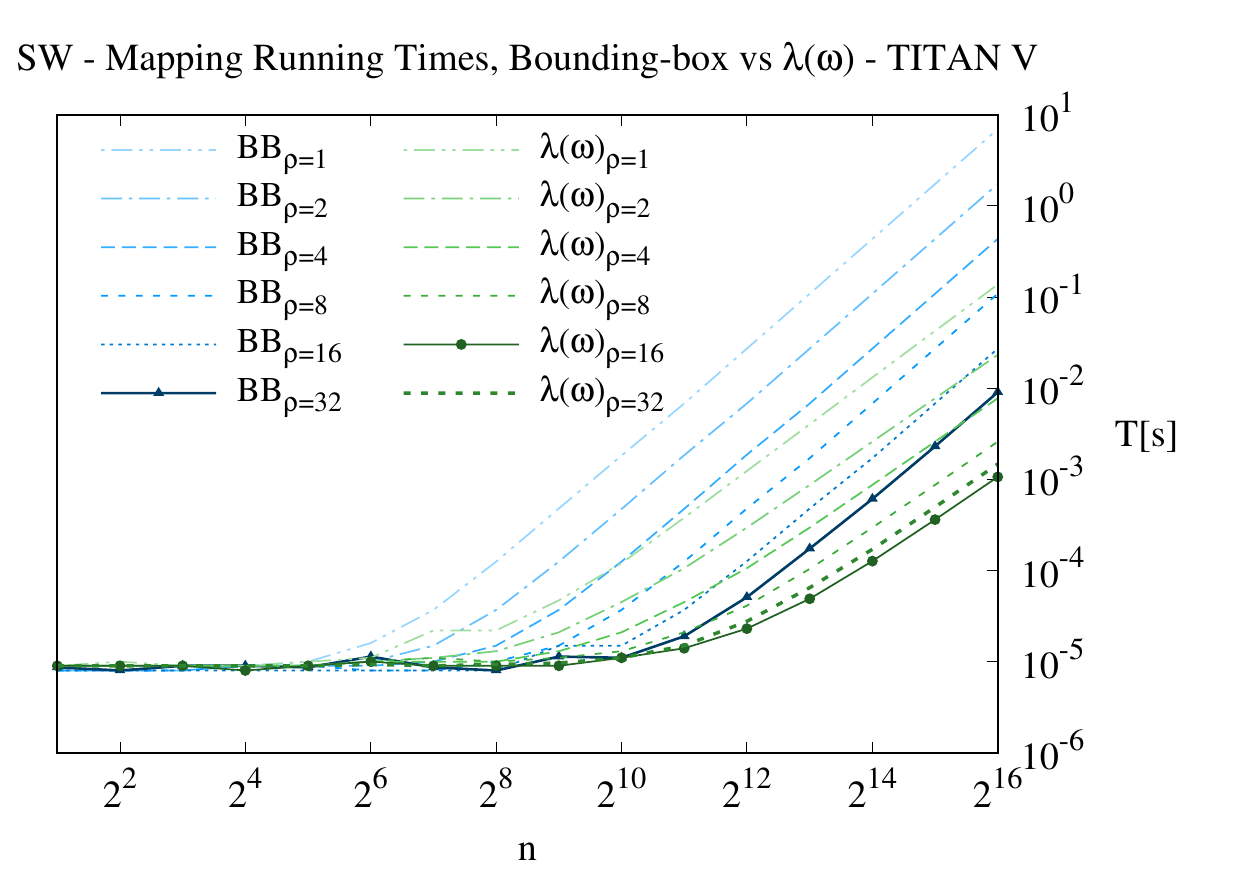}

\includegraphics[scale=0.53]{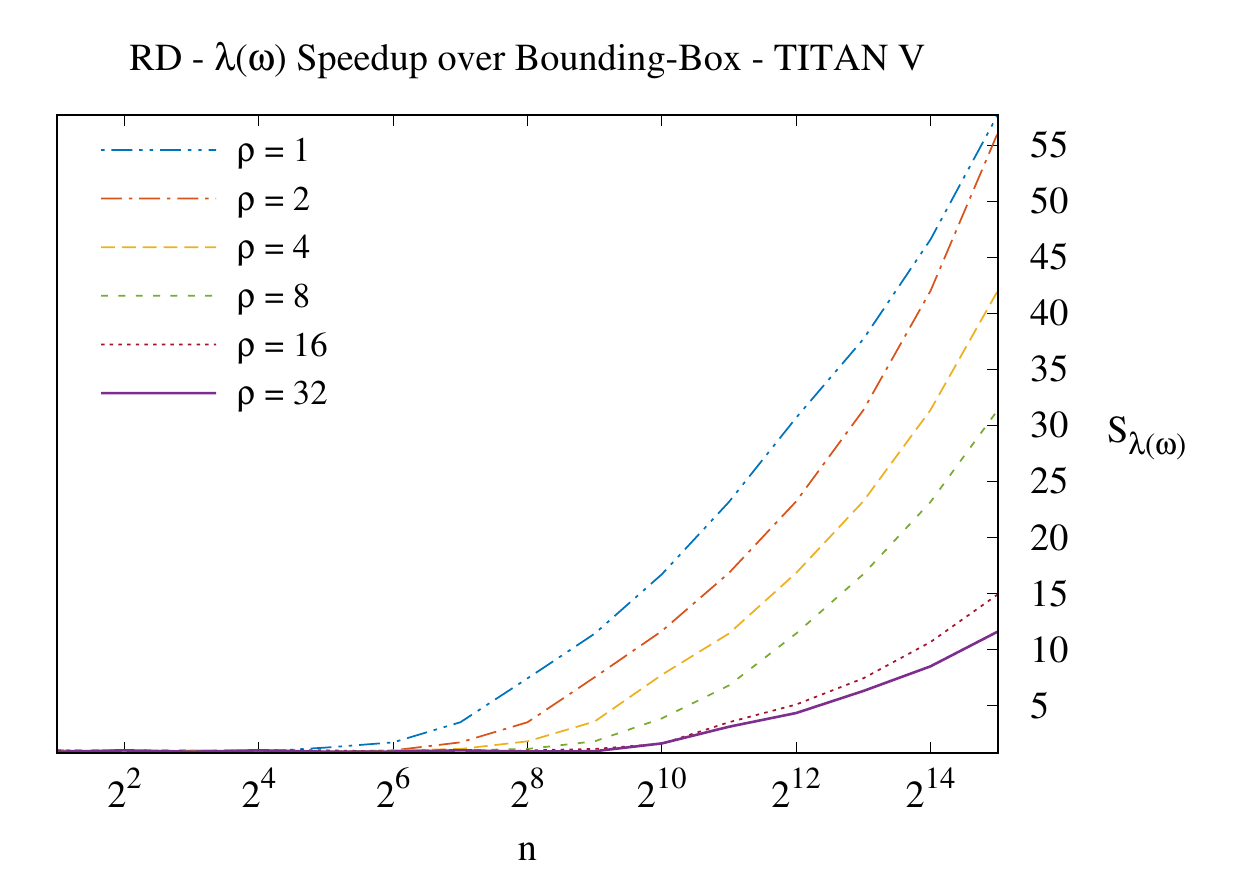}
\includegraphics[scale=0.53]{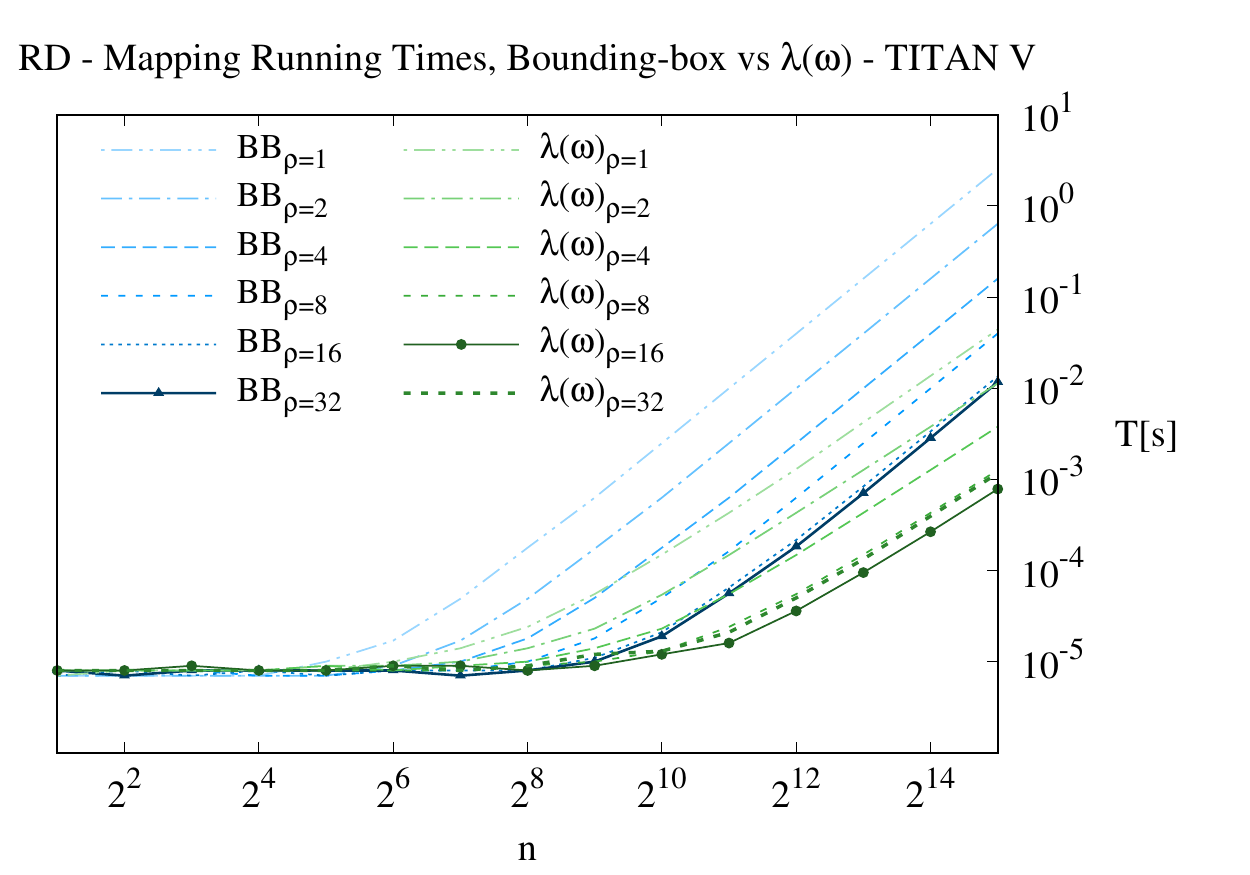}

\includegraphics[scale=0.53]{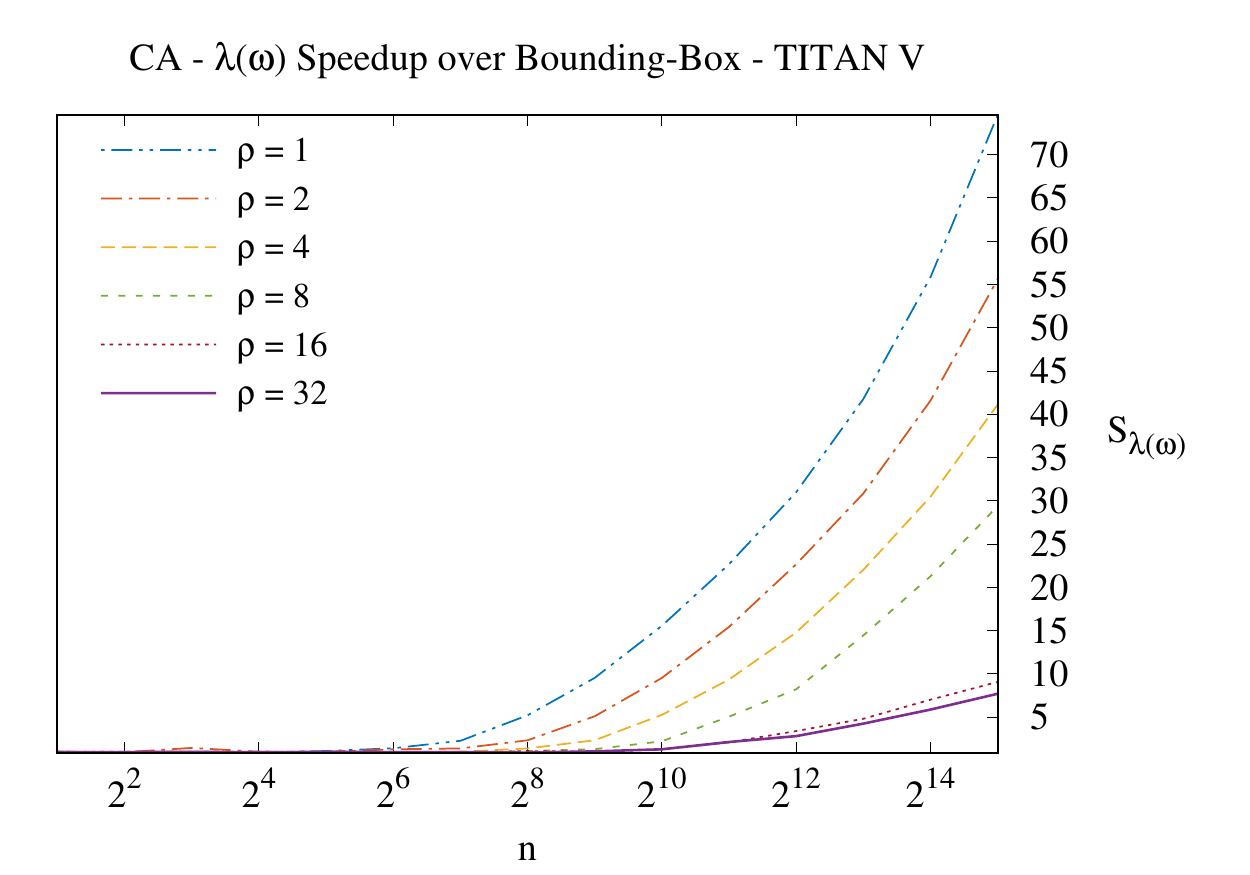}
\includegraphics[scale=0.53]{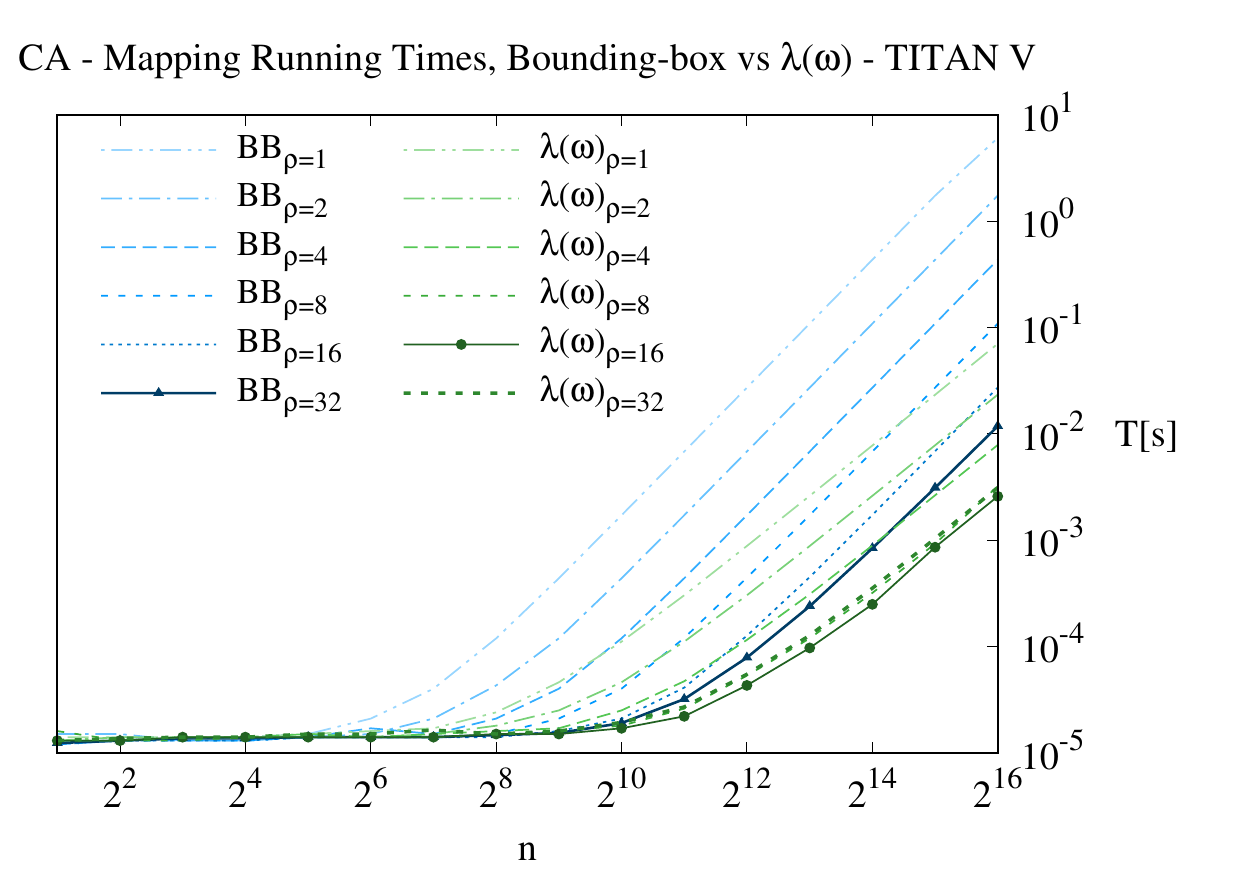}

\caption{The left column shows the speedup of $\lambda(\omega)$ with respect to the bounding-box approach at different block-size configurations and on the right column, 
their absolute running times at different block-size configurations. Each row shows the results of different test being: first row, test 1 simple write.
Second row, test 2 reduction. Third row, test 3 Cellular automata.}
\label{fig_performance}
\end{figure*}
For values of $n < 2^9$, one can note that only some curves offer speedup. Once $n > 2^9$, the speedup begins to increase for all block-size configurations,
reaching the higher values at $n=2^{16}=65536$, which was the highest problem size that fit in the GPU memory (\textit{i.e.} a fractal embedded in a region of $65536 \times 65536$). An important aspect to note from the speedup curves
is that for the largest possible block size, $|B|=\rho \times \rho=32 \times 32$, the $\lambda(\omega)$ map runs the tests between $6\times$ to $12\times$ faster than the bounding-box approach. Furthermore, as blocks become smaller in $\rho$, that improvement increases dramatically, reaching up to ~$75\times$ of speedup. 

The plot of the running times provides further insights on what configuration is the best suited for each mapping technique. By looking at the running times of
the small block configurations, one can note that regardless of their high speedup, their running times are the lowest, therefore these block sizes would not be used in practice. For the bounding-box approach
the best performance is obtained when the block-size is $|B| = 32 \times 32$. For $\lambda(\omega)$ the best performance is found when using a block of $|B| = 16\times 16$
threads. If the curves of the best configuration for each implementation are considered, \textit{i.e.}, the ones with bold mark from Figure \ref{fig_performance}, right, then the speedup provided by $\lambda(\omega)$ still reaches almost an order of magnitude. The running
time using other block sizes are still useful to visualize that as blocks become smaller, the value of $n_0$ where $\lambda(\omega)$ starts giving monotonically increasing speedup moves closer to the
origin, and vice versa. It is important to consider that the GPU, with its current organization and architecture, is not fully utilized when using very small block configurations, leading to an inferior performance than if larger blocks were used. Therefore, in practice large blocks would be utilized and by Theorem (\ref{theorem_speedup}), beyond $n = 2^{16}$ the speedup would keep increasing in favor of $\lambda(\omega)$.

We believe that these performance results can be useful for the GPU computing community as they show that for any modern programmable GPU, its performance can significantly improve when working with embedded NBB fractals just by employing a different thread map, not changing the rest of the application kernel code at all. In the next Section we describe how it is possible to further accelerate the performance of $\lambda(\omega)$ by adapting it to GPU 
tensor cores.

\subsection{Adapting $\lambda(\omega)$ for accelerated tensor core computation}
\label{sec_adaptingtc}
The Nvidia Volta GPU micro-architecture introduced a specialized hardware component called the Tensor Core. Actual GPUs of year 2018 and beyond can contain up to 640 tensor cores in addition to the regular GPU cores (which perform integer, floating point and read/write operations in parallel). Each tensor core is able to perform matrix-multiply-accumulate (MMA) operations on 4x4 matrices in one GPU clock cycle, which translates into a significant increase of TFLOPS compared to the classic operation mode of the GPU which is through the execution of floating point and integer arithmetic instructions. In order to make use of the Tensor Cores, the programmer must previously divide the problem into sub-problems of $16\times16$ sub-matrices, called fragments. This fragments are then given to the MMA subroutines which internally sub-divide them into 4x4 fragments to perform the operations in a warp-synchronized manner. 
Currently, as of 2020, details on how warps map to the fragments, or how tensor core perform the MMA operation are not fully specified by NVIDIA, moreover it is not guaranteed that a tensor-core based computation that works efficient in the Volta architecture (2017), will achieve the same level of performance in the Turing architecture (2019), or vice versa, as there are implementation details that are not exposed to the programmer. What is known is that the potential performance improvement will depend on how well the MMA operation can be exploited. These tensor-core related questions introduce additional motivations for knowing if the theoretical extra TFLOPS provided by a Tensor Core MMA operation, which were originally designed for Linear Algebra and Deep Learning, can be exploited to further speed up the calculation of $\lambda(\omega)$. 
Recent results support the idea that some computations may adapt well to tensor cores, such as the work of R. Carrasco \textit{et al.} where they study the potential speedup of computing the traditional arithmetic reduction based on tensor-core MMA operations \cite{8705253}. 
From their work, the authors conclude that the new tensor-core based reduction is in theory faster than a CUDA-Core based reduction. 
In this section we show how the computations for $\lambda(\omega)$ can be adapted as tensor-core MMA operations to calculate the tensor-core version of the map, namely $\lambda_{tc}(\omega)$.

In order speedup the calculation of $\lambda(\omega)$ with tensor cores, its equations must be encoded into a MMA operation in the form $D=A \times B+C$ where A, B, C, D are fragments, and C can be the same as D. There can be several ways to perform this encoding, and this section presents three variants of tensor core adaptation with their performance for the same tests.

\subsubsection{Variant 1: Simple per-block Tensor Core operation}
This variant employs one MMA computation for each block by exploiting the MMA-like behaviour from Eq. (\ref{eq:lambda-x}) and (\ref{eq:lambda-y}). This is done by expanding the sum, as well as the $\Delta_x^\mu, \Delta_y^\mu$ terms with the expression from Eq. (\ref{eq:deltas}), resulting into two sums of products; one to calculate $\lambda_x$ and the other to calculate $\lambda_y$. Each left multiplier from the sum terms is placed as an element of a row of fragment A and each right side of the sum terms is placed as a column of fragment B. Terms are placed in parallel and in the same order to match the corresponding pairements. 

The left side of both sums correspond to powers of two, from 0 to $\mu-1$, and are the same for $\lambda_x$ and $\lambda_y$. Therefore these factors can be encoded only using one row of fragment A and can be re-utilized for both $x$ and $y$ coordinates. The final encoding can be visualized in Figure \ref{fig_A_B_arangement}.

\begin{figure}[ht!]
    \centering
        $ A=
        \begin{pmatrix}
        2^0 & 2^1 & \dots & 2^{\mu-1}\\
        0 & 0 & \dots & 0\\
        \vdots & \vdots & \ddots & \vdots\\
        0 & 0 & \dots & 0\\
        \end{pmatrix}
        $
        $B=
        \begin{pmatrix}
        \tau^{1}_x & \tau^{1}_y & 0 & \dots & 0\\
        \tau^{2}_x & \tau^{2}_y & 0 & \dots & 0\\
        \vdots & \vdots & \vdots & \ddots & \vdots\\
        \tau^{\mu}_x & \tau^{\mu}_y & 0 & \dots & 0\\
        \end{pmatrix}
$
        \caption{Encoding of Variant 1 in a MMA manner. Note that this matrices are dimensions $\mu x \mu$ and when loaded into fragments of $16\times16$, remaining elements are filled with zeroes.}
    \label{fig_A_B_arangement}
\end{figure}

An important technical note is that Fragment B was defined as column major matrix and fragment A as a row major to ease the memory access during the calculations and minimize data divergence. Once the Tensor core operation is done, the results $\lambda_x$ and $\lambda_y$ become the first and second elements of the first row of fragment D, respectively.

\subsubsection{Variant 2: Sub-blocked tensor core operation}
This variant shares the principle of Variant 1, but expands the idea it by subdividing the block of threads into sub-blocks, calculating more block coordinates (one for each sub-block) still using one tensor core MMA operation. Since every fragment has 16 rows and columns, there is the potential of calculating 16 different values; 8 pairs of $(\lambda_x, \lambda_y)$.  
The approach assumes a thread block size large  enough, to contain 4 sub-blocks of size  $b/2 \times b/2$ threads that are still large enough to produce an efficient computation. These sub-blocks are now treated as independent blocks of threads that are not yet mapped and sit in parallel space ready to be mapped with $\lambda(\omega)$. Figure \ref{fig_grouping} shows an example using a block size of $32\times32$ subdividing into sub-blocks of $16 \times 16$.

\begin{figure}[ht!]
\centering
\includegraphics[scale=0.55]{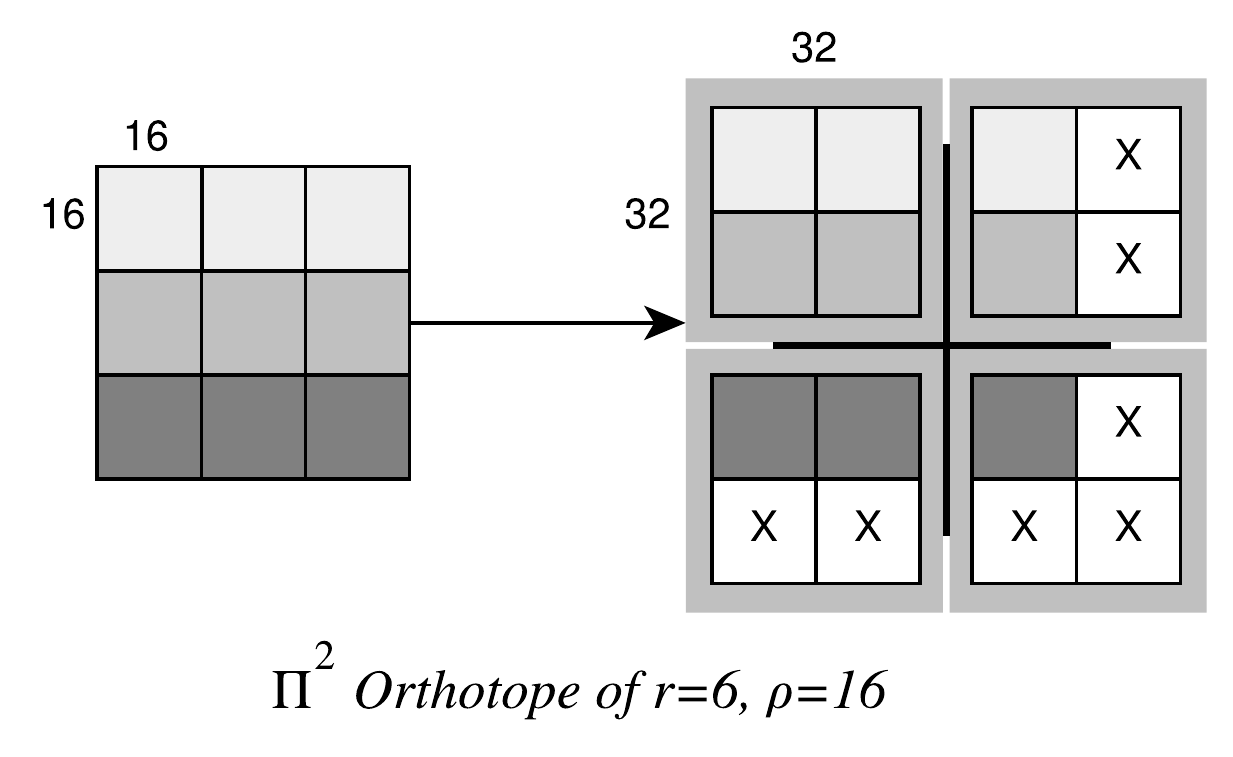}
    \caption{On the left a 2-Orthotope for a Sierpinski gasket of scale level is 6 with a sub-block size of $16 \times 16$. On the right, the 2-Orthotope when treated with block size of $32\times32$ and then subdivided.}
\label{fig_grouping}
\end{figure}

Once the MMA operation is done, the mapped coordinates for each sub-block are found in the first row of the resulting fragment $D$.
It is worth noticing that this approach introduces some chunks of unused threads which lie outside of the 2-Orthotope generated. The uncolored blocks marked with X in the right side of Figure \ref{fig_grouping} represents unused sub-blocks of threads. These extra threads do not introduce a significant cost as they are upper bounded by $O(\sqrt{n^{\mathcal{H}}})$ (\textit{i.e}, the perimeter of the packed fractal) in comparison to the domain of the 2-Orthotope which is $O(n^{\mathcal{H}})$.

\subsubsection{Variant 3: Full A, B and C usage}
The two variants described rely only on fragments A and B for its calculation while most of their fields are empty, while fragment C is unused, therefore not taking advantage of the addition operator of the MMA operation. This third variant maintains the same encoding for A and B, but including C in the calculation, filling completely all fields of the 3 fragments. Figure \ref{fig_A_B_arangementC} shows the arrangement of each matrix.

\begin{figure}[ht!]
    \centering
        $ A=
        \begin{pmatrix}
        2^0 & 2^1 & \dots & 2^{\mu-1}\\
        2^0 & 2^1 & \dots & 2^{\mu-1}\\
        \vdots & \vdots & \ddots & \vdots\\
         2^0 & 2^1 & \dots & 2^{\mu-1}\\
        \end{pmatrix} 
        $
        $B_x=
        \begin{pmatrix}
        \tau^{1}_x & \tau^{1}_x & \dots & \tau^{1}_x\\
        \tau^{2}_x & \tau^{2}_x & \dots & \tau^{2}_x\\
        \vdots      & \vdots    & \ddots & \vdots\\
        \tau^{\mu}_x & \tau^{\mu}_x & \dots & \tau^{\mu}_x\\
        \end{pmatrix}
$
$$
$$
$C_x=
        \begin{pmatrix}
        t_{1,1}^x & t_{1,2}^x & \dots & t_{1,\mu}^x\\
        t_{2,1}^x & t_{2,2}^x & \dots & t_{2,\mu}^x\\
        \vdots & \vdots & \ddots & \vdots\\
        t_{\mu,1}^x & t_{\mu,2}^x & \dots & t_{\mu,\mu}^x\\
        \end{pmatrix}
$
        \caption{MMA scheme of variant 3. Fragments $B_y$ and $C_y$ share the same representation as their $x$ counterpart. Note that these matrices are dimensions $\mu x \mu$ and when loaded into fragments, the extra elements get filled with zeroes.}
    \label{fig_A_B_arangementC}
\end{figure}
The result of the tensor Core MMA is now a thread coordinate in data space for each thread, whilst in the previous variants, it was a block coordinate in data space for all threads within that block.
This means that every thread accesses its corresponding coordinate in fragment D using its parallel space coordinates in a 1:1 mapping.
It is implied that this variant necessarily also encodes the intra-block mapping phase into the tensor core operation. Bounding sub-boxes was the mapping used by default.
In order for this Variant to work, 2 tensor core MMA operations must be performed per block, one for $\lambda_x$ coordinate  and the other for $\lambda_y$ of each thread.
This variant was developed for block size $\rho=16$ to fit the 256 $\lambda$ coordinates of the 256 threads. The $\rho=32$ version is also possible and would require to sub-divide the block into four sub regions of $16\times 16$ performing 8 tensor core operations in total, with the access to data space being contiguous by the sub-regions for they are not treated as independent blocks like in variant 2.

\subsubsection{$\lambda_{tc}(\omega)$ results}
The tensor core-based methods were tested with the same workloads and hardware as $\lambda(\omega)$ and were compared within their 
corresponding versions. The speedups of the tensor core variants with respect to the regular $\lambda(\omega)$ are shown in figure \ref{fig_tensorSpeedup}. It is important to review the results of both Volta and Turing architecture, considering that the internal implementation of tensor core operations may give different performances.

\begin{figure*}[ht!]

\includegraphics[scale=0.53]{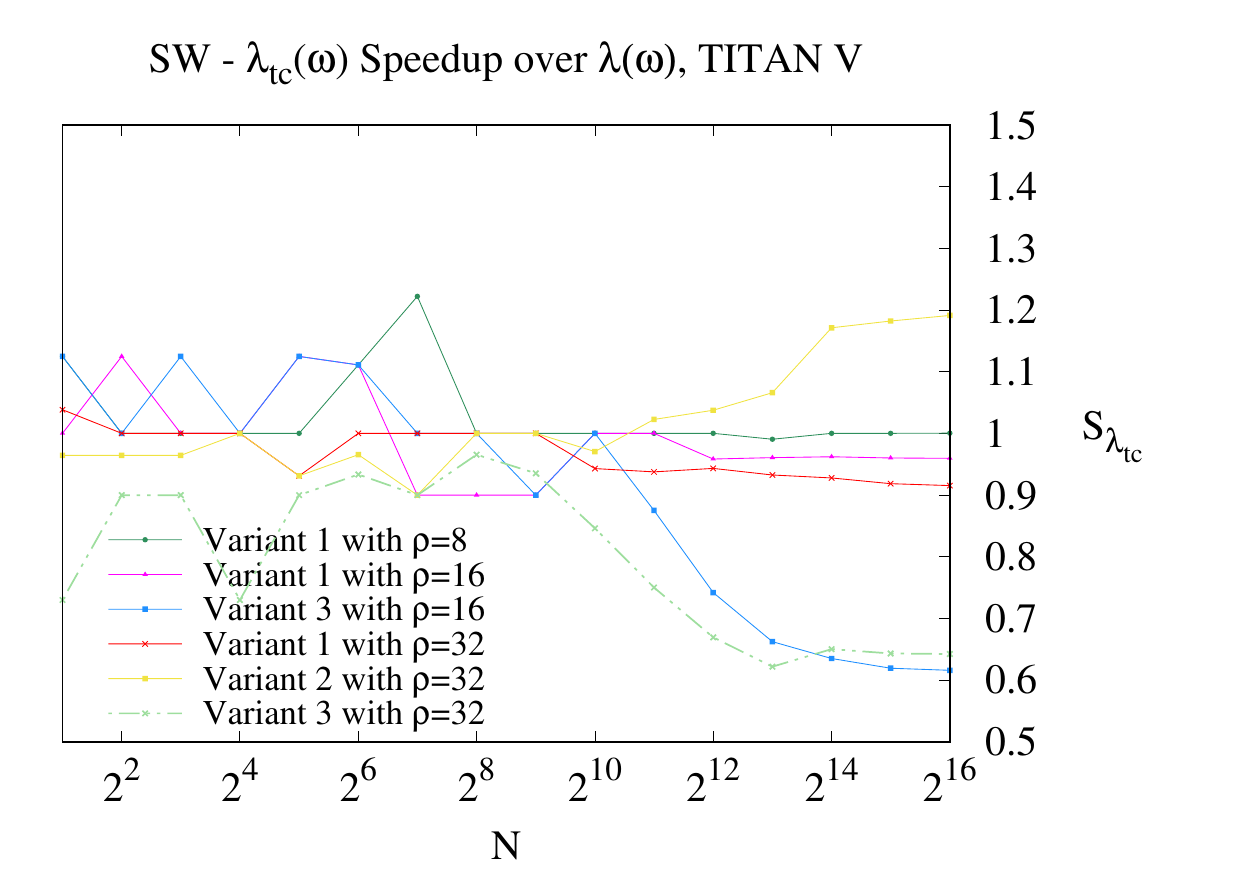}
\includegraphics[scale=0.53]{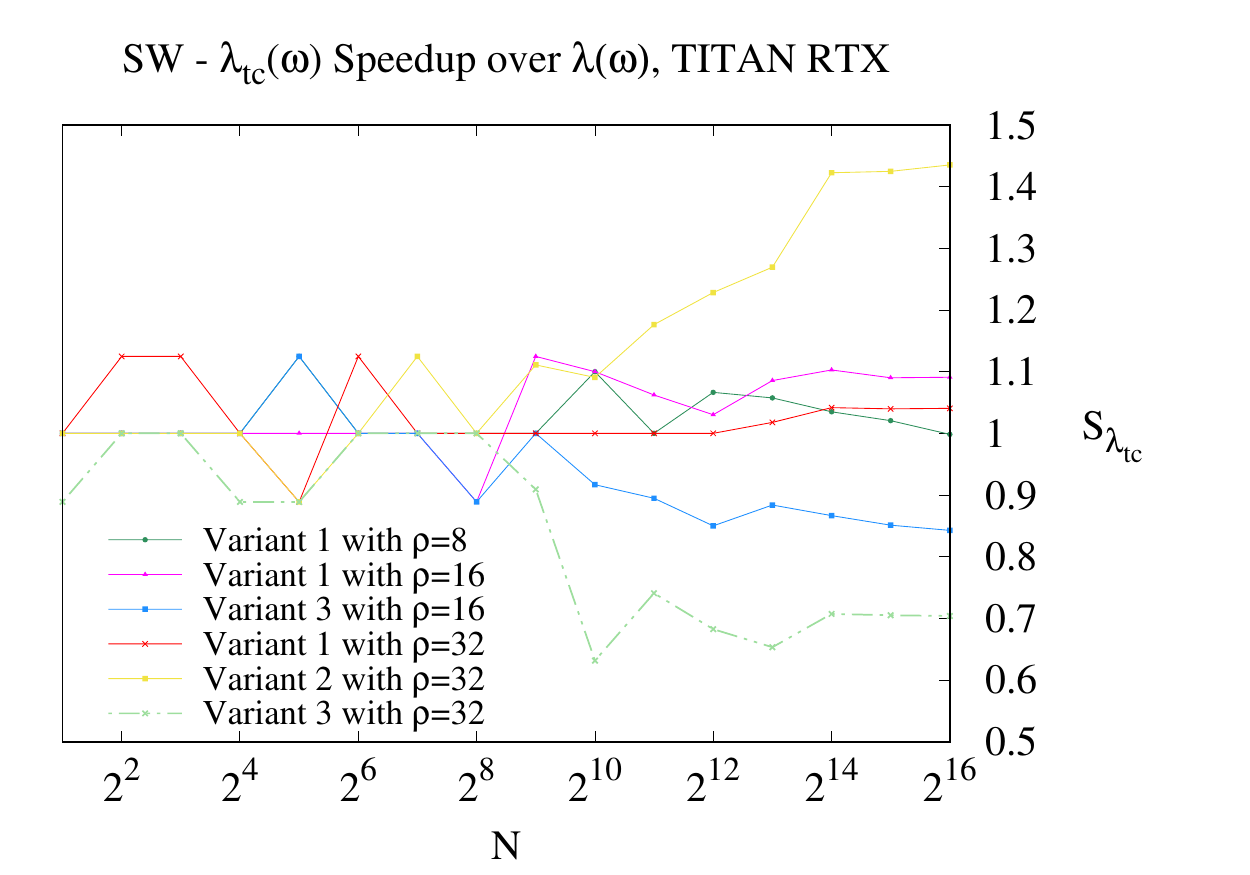}

\includegraphics[scale=0.53]{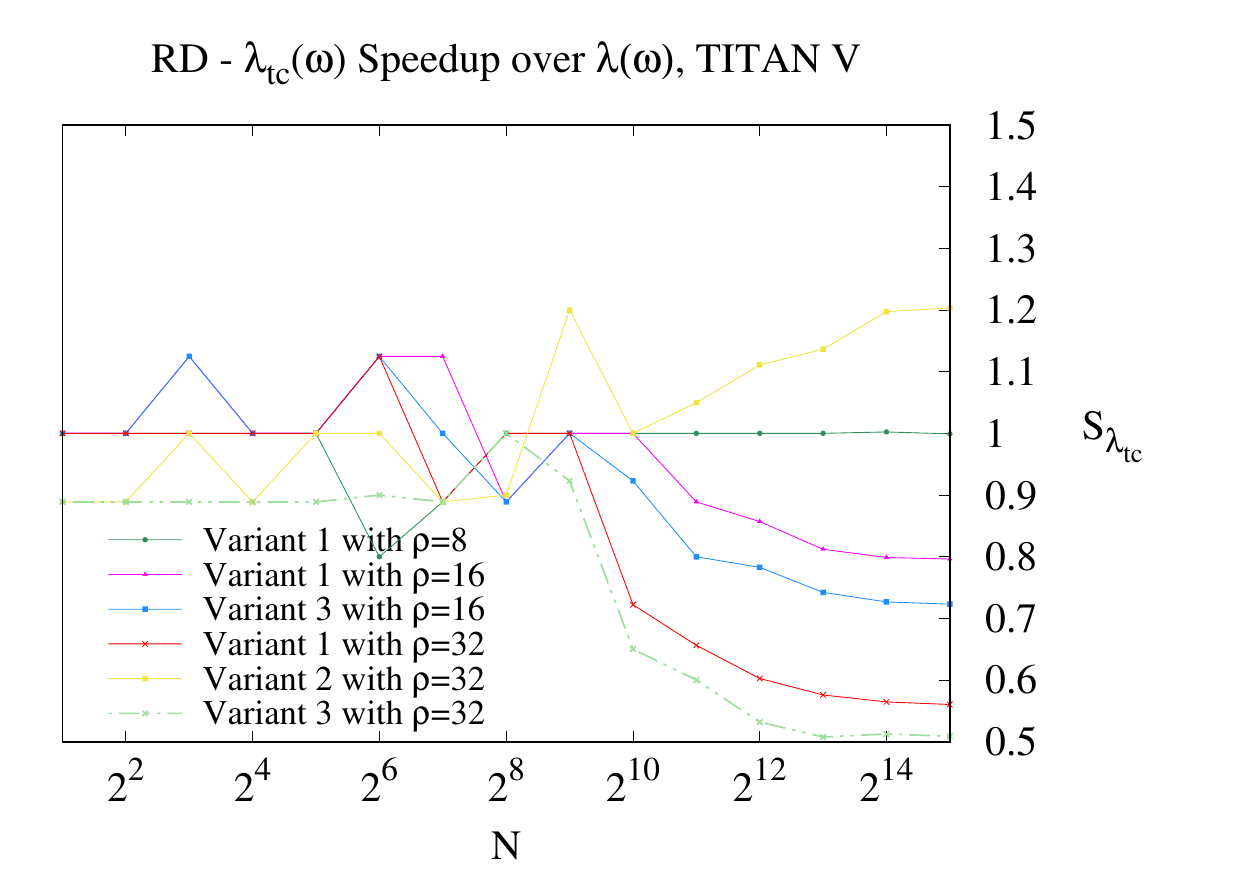}
\includegraphics[scale=0.53]{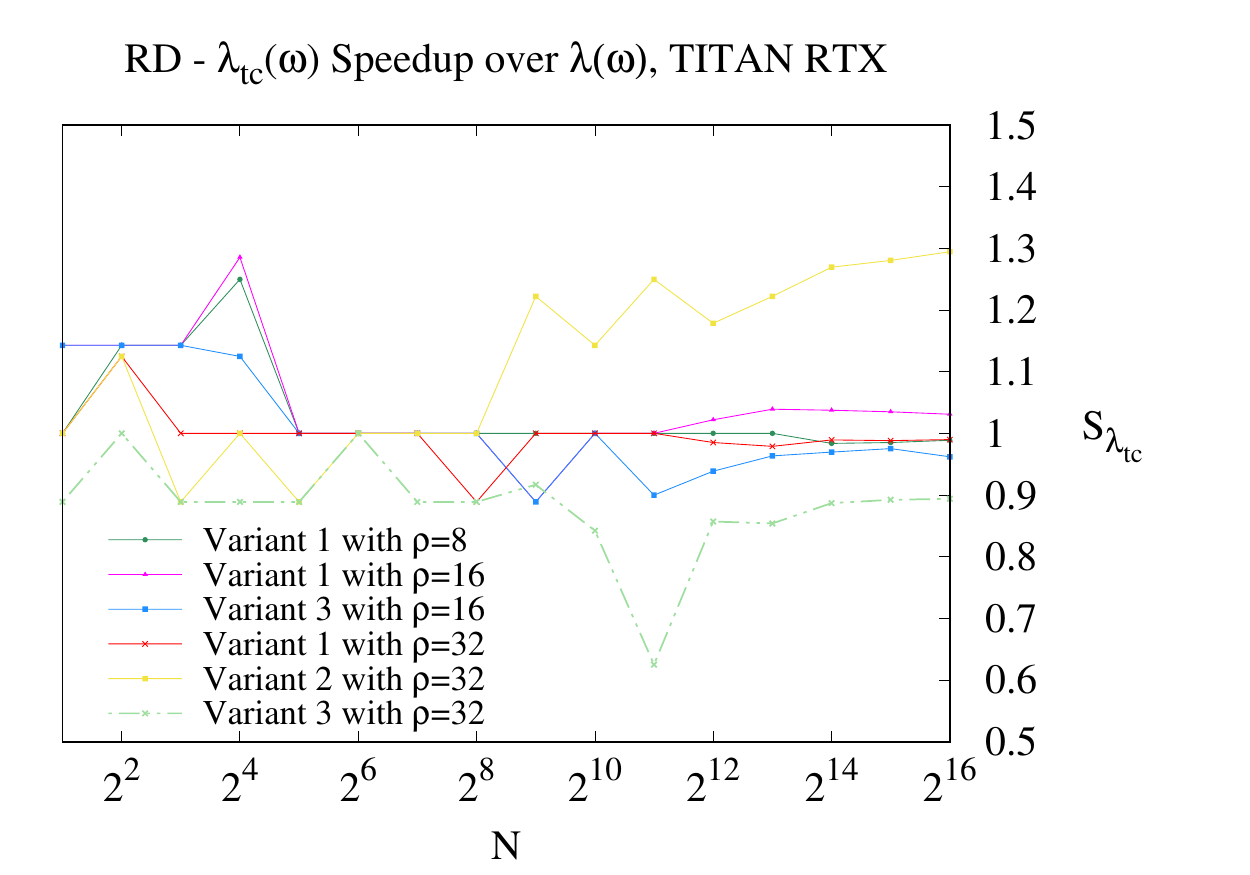}

\includegraphics[scale=0.53]{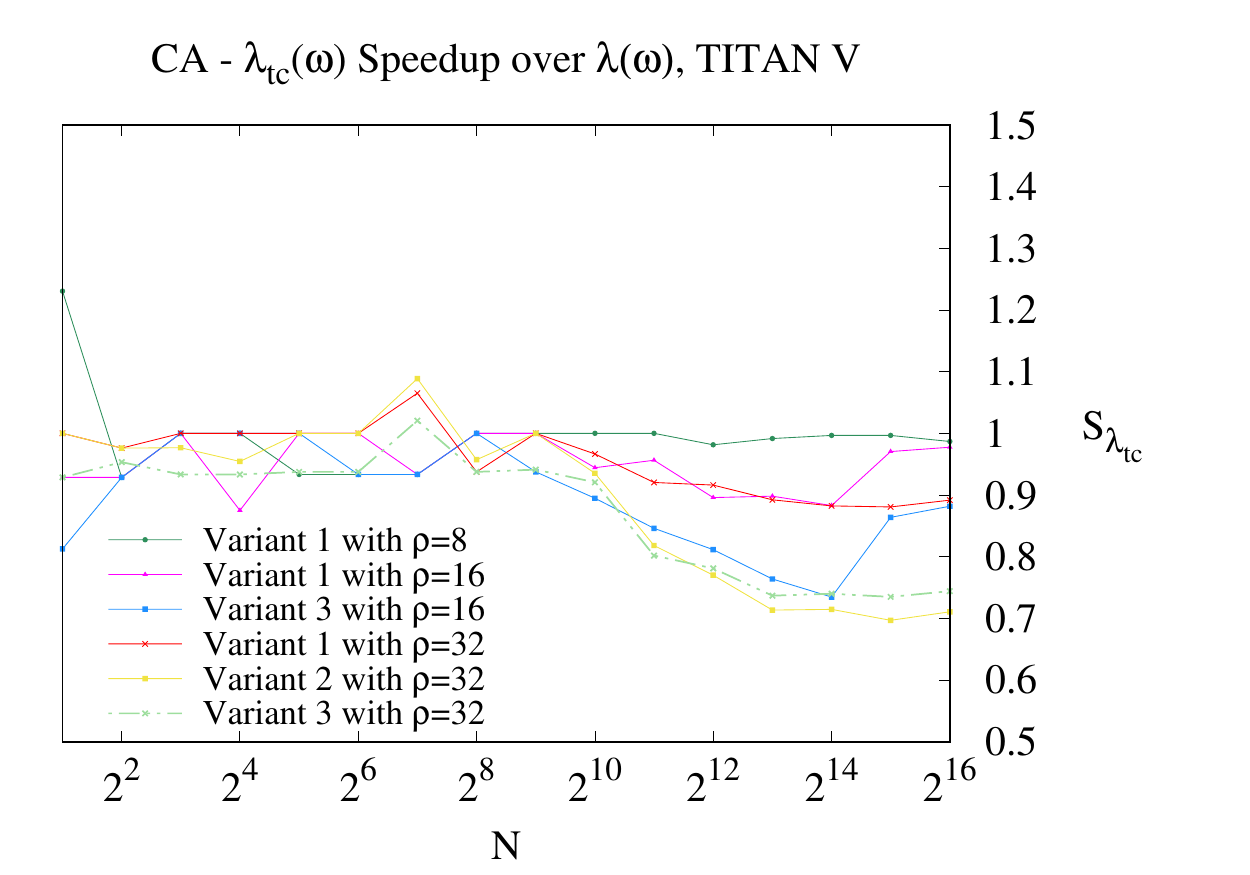}
\includegraphics[scale=0.53]{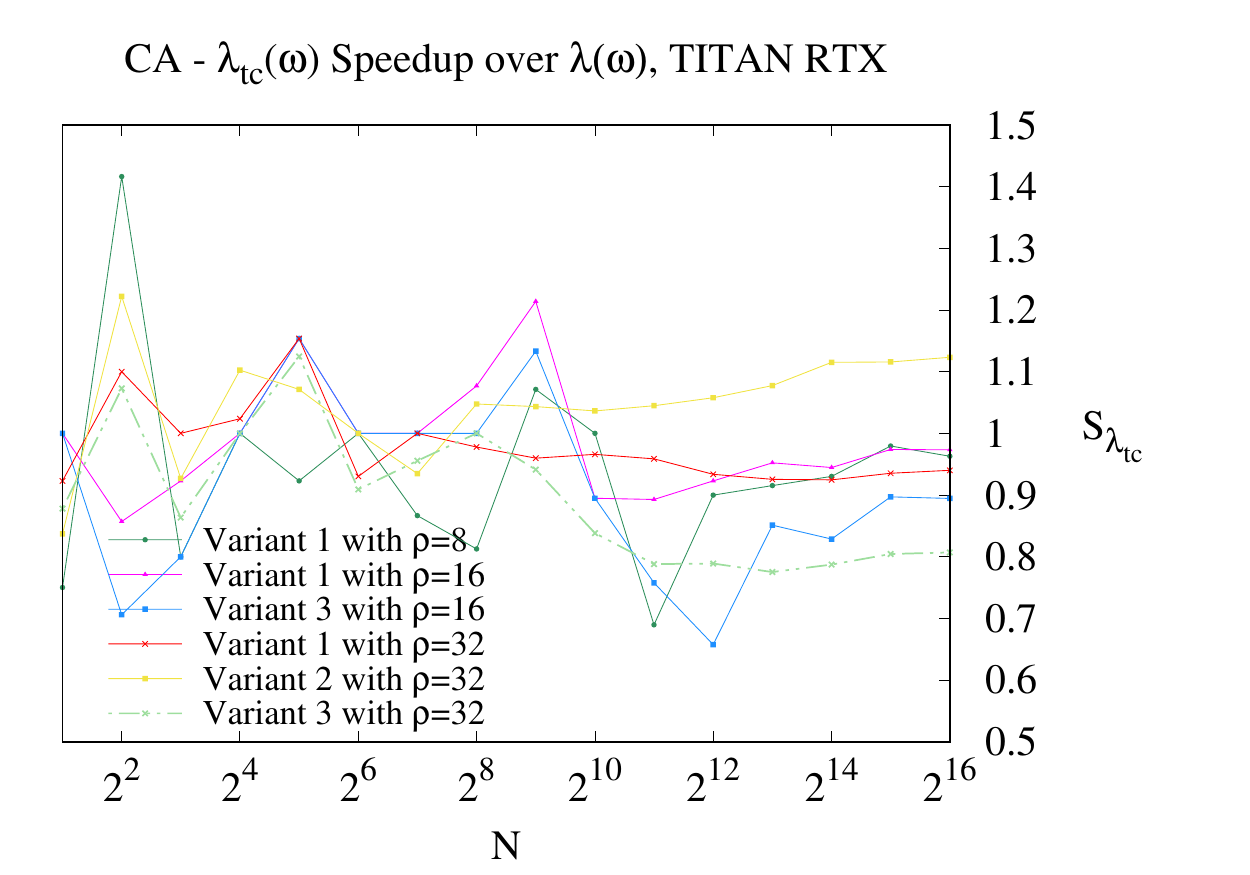}

\caption{The graphs shows the speedup of $\lambda_{tc}(\omega)$ with respect to non-tensor core based $\lambda(\omega)$. The left column shows the results with a TITAN V GPU (Volta architecture) and right column results with a TITAN RTX GPU (Turing architecture). 
}
\label{fig_tensorSpeedup}
\end{figure*}

Starting with the first test, the single-write (SW), results with the TITAN V show that when $n \ge n_0 = 2^{10}$, speedup curves reach a stable behavior, whereas with $n < n_0$ the speedup curves show unstable behavior oscillating around $1.0$ of speedup. In the large scale regime variant 2 gives up to $~20\%$ of extra performance over the regular $\lambda(\omega)$ map. 
With the TITAN RTX, the speedup curves show that again variant 2 increases $\lambda(\omega)$ performance up to a $~40\%$, and variant 1 becomes a usable choice as it gets a speedup above 1.0 in all configurations. In both GPUs, variant 3 results in poor performance when $n$ is greater than $n_0$ with both $\rho=16$ and $\rho=32$ block sizes. In the reduction (RD) test the results were no different from SW, with variant 2 getting the best performance with a $~20\%$ and $~30\%$ of increased performance in TITAN V and RTX respectively. Variant 1 is only beneficial in TITAN RTX with $\rho=16$, and variant 3 is inefficient in all block configurations.
Lastly, results on the Celullar Automata test (CA) when $n<n_0$ shows the same behaviour as previous tests, but as $n$ grows from $n_0$ in TITAN V, speedup of all curves start to decrease below 1, the same happens with TITAN RTX except for variant 2 that gets a positive performance boost. The general performance hit observed for all variants may be a consequence of the computation patterns found in the CA simulation.

Summarizing the tensor core results, the variant that showed better results overall was variant 2 with up to a $~40\%$ of performance boost over non-tensor core lambda when $n>2^{10}$. Variant 1 is also better under certain configurations with a $5\sim 10\%$ of performance boost. Variant 3 showed slower performance than the regular $\lambda(\omega)$ thus can be discarded. 

\section{Discussion and Conclusions}
\label{sec:discussion-conclusions}
This work has shown that the $\lambda(\omega)$ map proposed for NBB fractals leads to a significant performance speedup both in theory and in experimental tests with an embedded Sierpinski gasket. 
The analysis and formulation of $\lambda(\omega)$ has provided three important 
results in the theoretical aspect; (1) There exists a correspondence between a quasi-regular $2$-orthotope and NBB fractals, (2) such correspondence can be computed in just $\mathcal{O}(\log_2 \log_2(n))$ time and (3) the total work required for mapping the $2$-orthotope used with $\lambda(\omega)$ is asymptotically smaller than the work generated by the bounding box approach, leading to a monotonically increasing speedup starting from $n \ge n_0$. In particular, Theorem \ref{theorem_speedup} serves as a guarantee that using $\lambda(\omega)$ on any NBB fractal will provide a significant performance speedup starting from a certain fractal scale onward, as the speedup monotonically increases with $n$.

The experimental performance results confirm the theoretical results, showing monotonically increasing speedup once $n \ge n_0 = 2^9$ and up to $9\times$ of speedup over a bounding-box approach using optimal block-size settings for each approach. Using smaller block-sizes leads to even higher speedups (up to $75\times$) but slower running times. It is uncommon for GPU applications to use small block sizes, but in case it is required, a significant performance improvement is available with this approach. Still, the exploration of GPU performance under different block-sizes has allowed to understand that small block sizes behave as the theoretical results in a strong way, while the largest block sizes, although still produce monotonically increasing speedup, behave as the theory in a weaker form.
The adaptation of $\lambda(\omega)$ to use tensor core computation offered up to $\sim40\%$ of extra performance. In order to achieve efficient tensor core computation it is important to exploit communication between tensor core fragments and shared memory as most as possible, and in the case of non-Machine Learning tasks, to codify the computation with the least redundant data into the fragments. The closer the task is to linear algebra, the easier this adaptation will be.

It is important to note that the GPU thread map presented in this work, along with the tensor core optimization, can be implemented for any fractal belonging to the NBB family. The implementation 
of the case study from this work, including the three tests with and without tensor-core adaptations, is available for the community at \url{https://github.com/crinavar/xxxxx}\footnote{Note to the reviewers: the repository will be made open to the community in the published version of this article.}. Future work on this line can follow two paths; (1) further study GPU thread mapping on fractals and extend the NBB family to include other fractals which use rotations in the replicas, such as the Koch curve, and (2) evaluate the performance of compact fractal manipulation in GPU, that is, to pack the data space into an orthotope as the parallel space, and allow operations on the structure without decompressing the fractal, just by using $\lambda(\omega)$ and $\lambda(\omega)^{-1}$ for unrolling and rolling the computations. This last path could allow handling much larger fractals in GPU as the memory used would be in the order of $n^{\mathcal{H}}$. Future research in these directions can provide important insights on the potential benefits of efficient GPU computing for fractal geometry.


\section*{Acknowledgment}
This work was supported by the research projects FONDECYT N$^o$ 11180881 and 1181506, both from CONICYT, as well as by the Nvidia CUDA Research Center at the Department of Computer Science (DCC) from University of Chile and the Millenium Institute Foundational Research on Data (IMFD).





\bibliographystyle{elsarticle-num}
\bibliography{main}







\end{document}